\documentclass[aps,prl,10pt,twocolumn,superscriptaddress,floatfix,nofootinbib,showpacs,longbibliography]{revtex4-2}
\usepackage{amsthm}
\usepackage{amsmath,bm}
\usepackage{amssymb}
\usepackage{amsfonts}
\usepackage{graphicx}
\usepackage{txfonts}
\usepackage{xcolor}
\usepackage{float}
\usepackage{braket}
\usepackage[colorlinks=true,linkcolor=blue,citecolor=blue,urlcolor=blue]{hyperref}
\usepackage[capitalise]{cleveref}
\usepackage{bbm}
\usepackage{changes}
\usepackage{ragged2e}

\newcommand{\tr}{\text{Tr}}

\newcommand{\ignore}[1]{} 

\newcounter{SaveEqnCntr}

\usepackage[justification=justified, format=plain]{subcaption}
\usepackage[justification=raggedright]{caption}

\newcommand{\be}{\begin{equation}}
\newcommand{\ee}{\end{equation}}
\newcommand{\ba}{\begin{eqnarray}}
\newcommand{\ea}{\end{eqnarray}}

\newtheorem{theorem}{Theorem}
\newtheorem{corollary}{Corollary}
\newtheorem{definition}{Definition}
\newtheorem{proposition}{Proposition}

\newtheorem{lemma}{Lemma}

\begin{document}
	
\title{Operational certification of nonclassicality in arbitrary quantum states from few copies}

\author{Sudip Chakrabarty}
\email{sudip27042000@gmail.com}
\affiliation{S. N. Bose National Centre for Basic Sciences, Block JD, Sector III, Salt Lake, Kolkata 700 106, India}

\author{Bivas Mallick}
\email{bivasqic@gmail.com}
\affiliation{S. N. Bose National Centre for Basic Sciences, Block JD, Sector III, Salt Lake, Kolkata 700 106, India}

\author{Ananda G. Maity}
\email{anandamaity289@gmail.com}
\affiliation{School of Physical Sciences, Indian Institute of Technology Goa, Ponda 403401, Goa, India}

\author{A. S. Majumdar}
\email{archan@bose.res.in}
\affiliation{S. N. Bose National Centre for Basic Sciences, Block JD, Sector III, Salt Lake, Kolkata 700 106, India}

\begin{abstract}
States with negative Wigner functions constitute a fundamental nonclassical resource underlying quantum advantage. However, their experimental certification typically relies on reconstructing the full phase-space distribution, resulting in a prohibitive measurement overhead for arbitrary quantum states. In this letter, we overcome this limitation by introducing Wigner moments, a family of global phase-space quantities that admit an exact multicopy realization as parity expectation values, and are therefore directly measurable from only a modest number of state copies. This operational correspondence enables systematic hierarchies of detection criteria together with experimentally accessible lower bounds on the logarithmic Wigner negativity, and constitutes a genuine measure of nonclassicality. Numerical benchmarking of our protocol  reveals a substantial reduction in copy budget relative to conventional Wigner tomography, establishing Wigner moments as an efficient framework for certifying continuous-variable quantum resources such as genuine multipartite entanglement.
\end{abstract}
\maketitle
{{\emph{Introduction---}}}Continuous-variable (CV) quantum systems, particularly optical platforms, constitute a central architecture for modern quantum technologies due to their experimental accessibility, high-bandwidth operation, and intrinsic scalability \cite{review_braunstein,agarwal2012quantum,gerry2023introductory}. In such systems, non-classicality is generally certified through properties of various phase-space quasiprobability distributions. Among them, the Wigner function plays a distinguished role, providing a complete phase-space representation of quantum states while retaining many features of a classical probability distribution \cite{wigner1932quantum,kenfack2004negativity,cormick2006classicality}. 

The significance of the Wigner function can also be highlighted by its connection to the Gottesman-Knill theorem, which establishes that quantum circuits restricted to Clifford operations admit efficient classical simulation \cite{gottesman1998heisenberg}. This efficient simulability, however, generally breaks down in the presence of Wigner negativity which is therefore widely regarded as a necessary resource for quantum computational advantage \cite{eisert2012positivewigner}. More broadly, states having negative Wigner-functions demonstrate quantum advantages in a variety of information processing tasks, including quantum computational speedup \cite{galvao2005discrete}, quantum error correction \cite{niset2009no}, quantum state distillation \cite{giedke2002characterization,fiuravsek2002gaussian,eisert2002distilling}, and many more \cite{bartlett2002efficient,eisert2012positivewigner}.

Driven by the fundamental role of Wigner negativity in quantum information processing, significant effort has been devoted to its detection and quantification \cite{chabaud2021witnessing,cimini2020neural,leibfried1996experimental,smithey1993measurement,banaszek1999direct,lutterbach1997method,bertet2002direct,winkelmann2022direct,lv2017reconstruction,lvovsky2001quantum,fluhmann2020direct,xiang2022distribution,mari2011directly}.
Existing methods predominantly rely on quantum-state tomography followed by reconstruction of the Wigner function \cite{lvovsky2009continuous,d2003quantum,rahimi2011quantum,d2001quantum}. However, quantum-state tomography requires a number of state copies that scales rapidly with the system size \cite{haah2017sample,odonnell2016efficient}, making it prohibitively expensive for large and arbitrary quantum states. There exist witness based approaches as well~\cite{chabaud2021witnessing,mari2011directly}. However these witnesses are tailored to specific classes of states, typically developed via semi-definite programs that are difficult to develop for arbitrary large systems. This raises a fundamental question: \emph{Can Wigner negativity in arbitrary states be certified from only a few number of state copies, without reconstructing the Wigner function?} 

In this Letter, we answer this question affirmatively by introducing an operational moment-based framework for certifying Wigner negativity of arbitrary quantum states using only a modest number of state copies. We first establish an exact multicopy representation of arbitrary Wigner moments as expectation values of parity observables, thereby transforming the certification of Wigner negativity from reconstructing the full Wigner function to estimating a small set of experimentally accessible moments. Exploiting universal constraints obeyed by positive Wigner functions, we then derive systematic hierarchies of moment-based witnesses that certify Wigner negativity in arbitrary CV quantum states. 

We further show that the same moments can be utilized to construct a bonafide measure of nonclassicality, providing experimentally certifiable lower bounds on the logarithmic Wigner negativity. We benchmark direct moment estimation using classical-shadow protocols against conventional Wigner tomography for representative mixed Wigner-negative states, showing that our
approach leads to a significant reduction in the copy budget. Moreover, our approach achieves reliable certification with the same physical copy budget while providing significantly higher estimation accuracy than tomographic reconstruction. Our formalism admits direct application in certifying
resources such as genuine multipartite entanglement.

{{\emph{Preliminaries---}}}We consider an $N$-mode CV quantum system described by bosonic annihilation operators $\vec{a}\coloneqq(a_1,\dots,a_N)$ satisfying the canonical commutation relations $[a_m,a_{m'}]=[a_m^\dagger,a_{m'}^\dagger]=0$ and $[a_m,a_{m'}^\dagger]=\delta_{m,m'}\mathbb{I}$. The associated Hilbert space is $\mathcal{H}=\bigotimes_{m=1}^N \mathcal{H}_m$, where each $\mathcal{H}_m$ is an infinite-dimensional space spanned by the Fock basis $\{\ket{n}\}_{n\in\mathbb{N}}$, and physical states are represented by density operators $\rho \in \mathcal{D}(\mathcal{H})$. Introducing the quadrature operators $\hat{x}_m = ({a_m + a_m^\dagger})/{\sqrt{2}}, ~\hat{p}_m = ({a_m - a_m^\dagger})/{\sqrt{2}i}$ which satisfy $[\hat{x}_m,\hat{p}_{m'}]=i\delta_{m,m'}\mathbb{I}$ (throughout the letter, we consider $\hbar=1$), we obtain a $2N$-dimensional phase-space description with canonical coordinates $(\vec{x},\vec{p})$. In this representation, the quantum state $\rho$ can be described by the Wigner function \cite{cahill1969density,wigner_function_for_pedestrians}
\begin{equation}
W_{\rho}(\vec{x},\vec{p}) = \frac{1}{(2\pi)^N} \int d^N y \;
\bra{\vec{x}+\frac{\vec{y}}{2}} \rho \ket{\vec{x}-\frac{\vec{y}}{2}} 
\, e^{-i\vec{p}\cdot\vec{y}},
\end{equation}
which is real and normalized, i.e., $\int d{x}^N\, d{p}^N \; W(\vec{x},\vec{p}) = 1$. Unlike classical probability distributions, the Wigner function can attain negative values, a hallmark of nonclassicality \cite{kenfack2004negativity, chakrabarty2026measurementinducedremoteactivationnonclassicality, mari2011directly}, that constitutes a key resource for CV quantum information processing. The relevant and essential features pertaining to
the Wigner function are summarized in \hyperref[SM_A]{Supplemental Material (SM)~A}~\cite{SM}.

{{\emph{Wigner moments as phase-space observables---}}}To certify Wigner negativity from a few copies of an arbitrary quantum state, without reconstructing its Wigner function, we introduce a family of global phase-space quantities that encode the signatures of Wigner negativity while admitting an exact operational realization. We refer to them as the Wigner moments ($w_m$), defined as follows (their basic properties are summarized in \hyperref[SM_B]{SM~B}~\cite{SM}).
\begin{definition} \label{moments}
Let $W$ be the Wigner function of an $N$-mode quantum state. The $m$-th order $W$-moments $(w_m)$ with $m \in \mathbb{N}$ are defined as:
\begin{equation} 
\begin{split}
 w_m:=\int_{\mathbb{R}^{2N}} W^{m} \hspace{0.2cm} d^N x \, d^N p. \label{eq:wigner_moments}
\end{split}
\end{equation}
\end{definition}
Having introduced the Wigner moments, a natural question is whether these nonlinear functionals admit any operational interpretation. We now answer this question affirmatively by showing that every Wigner moment can be realized exactly as the expectation value of a parity observable acting on a finite number of copies of the quantum state.

Let us first recall the definition of the Wigner function in the complex phase-space picture:
$W_\rho(\vec{\alpha}) = \mathrm{Tr}\!\left[\rho \, \Delta (\vec{\alpha})\right],$
where $\alpha = (x+ip)/\sqrt{2}$, such that $dx \,dp \equiv 2 d^2\alpha$, $\Delta(\vec{\alpha}) =  \left({1}/{\pi}\right) D(\vec{\alpha}) \Pi D^\dagger(\vec{\alpha})$ is the displaced parity operator,  $\Pi = e^{-i\pi a^\dagger a}$ is the parity operator and $D(\vec{\alpha}) = e^{\alpha a^\dagger - \alpha^* a}$ is the displacement operator. Consider the multi-copy operator
\[
A_n
=2\int d^2\alpha\,
\Delta(\vec{\alpha})^{\otimes n}.
\] 
Consequently, the $n$-th Wigner moment can be written as
\begin{equation}
w_n
=\mathrm{Tr}
\!\left[
\rho^{\otimes n}A_n
\right].
\label{eq:wn_multicopy}
\end{equation}
The operators $\{A_n\}$ are Hermitian by construction. The exact structure of \(A_n\) can be obtained by analyzing its position-space kernel and exploiting a decomposition into collective and relative degrees of freedom (see \hyperref[SM_C]{SM~C} \cite{SM}). The resulting operator takes a remarkably simple and universal form, which we state in the following theorem.

\begin{theorem}
\label{theorem:moment_operator}
Let \(w_n\) denote the \(n\)-th Wigner moment of a single-mode quantum state \(\rho\). Then
\begin{equation}
w_n
=\frac{1}{n\pi^{\,n-1}}
\mathrm{Tr}
\!\left[
\rho^{\otimes n}
F_n^\dagger
\Bigl(
I\otimes
\Pi^{\otimes(n-1)}
\Bigr)
F_n
\right],
\label{eq:theorem_moment_operator}
\end{equation}
where \(F_n\) is any orthogonal transformation whose first output mode coincides with the normalized collective coordinate
\[
Q_0
=\frac1{\sqrt n}
\sum_{j=1}^{n}x_j,
\]
and \(\Pi\) denotes the single-mode parity operator.
\end{theorem}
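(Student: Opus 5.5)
The plan is to compute the position-space kernel of the multicopy operator $A_n=2\int d^2\alpha\,\Delta(\alpha)^{\otimes n}$ from Eq.~\eqref{eq:wn_multicopy} explicitly. I would then recognise the result as the kernel of $I\otimes\Pi^{\otimes(n-1)}$ written in rotated coordinates. First, change variables to real phase-space coordinates $(x_0,p_0)$ with $\alpha=(x_0+ip_0)/\sqrt2$. Since $dx\,dp=2\,d^2\alpha$, this gives $A_n=\pi^{-n}\int dx_0\,dp_0\,\bigl[D\Pi D^\dagger\bigr]^{\otimes n}$.

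Next, I would write down the single-mode kernel of the displaced parity. Using $\Pi\ket{x}=\ket{-x}$ and the action of $D(x_0,p_0)$ on position eigenstates, one gets
\begin{equation}
\bra{x'}D\Pi D^\dagger\ket{x}=\delta(x'+x-2x_0)\,e^{2ip_0(x-x_0)},
\end{equation}
up to a phase convention that I would fix carefully. Taking the $n$-fold tensor product and integrating over $p_0$ first, the phases combine into $\exp\bigl(2ip_0\sum_j(x_j-x_0)\bigr)$. This integral gives $2\pi\,\delta\bigl(2\sum_j(x_j-x_0)\bigr)=(\pi/n)\,\delta(x_0-\bar x)$, with $\bar x=\frac1n\sum_j x_j$. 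The $x_0$ integral then sets $x_0=\bar x$. The kernel of $A_n$ becomes
\begin{equation}
\frac{1}{n\pi^{n-1}}\prod_{j=1}^n\delta\bigl(x'_j+x_j-2\bar x\bigr),
\end{equation}
which already shows where the prefactor $1/(n\pi^{n-1})$ comes from.

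The key step is to read the delta-constraints in the rotated coordinates $Q=Fx$ and $Q'=Fx'$, where $F$ is orthogonal with first row $(1,\dots,1)/\sqrt n$. Summing the constraints gives $Q'_0=Q_0$. Projecting onto any vector orthogonal to $(1,\dots,1)$ kills the $\bar x$ term and gives $Q'_k=-Q_k$ for $k\ge1$. Because $F$ is orthogonal, it has unit Jacobian, so $\prod_j\delta(x'_j+x_j-2\bar x)=\delta(Q'_0-Q_0)\prod_{k\ge1}\delta(Q'_k+Q_k)$. This is precisely the position kernel of $I\otimes\Pi^{\otimes(n-1)}$ in the $Q$-modes. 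Finally, the real orthogonal $F$ lifts to a passive Gaussian unitary, also denoted $F_n$, with $\bra{x'}F_n\ket{x}=\delta(x'-Fx)$. Conjugation by this unitary turns the kernel above into $F_n^\dagger\bigl(I\otimes\Pi^{\otimes(n-1)}\bigr)F_n$. Substituting into Eq.~\eqref{eq:wn_multicopy} gives Eq.~\eqref{eq:theorem_moment_operator}.

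Independence from the choice of $F_n$ follows because $I\otimes\Pi^{\otimes(n-1)}$ acts as $Q_k\mapsto -Q_k$ on the whole relative subspace. It therefore commutes with every rotation among the relative modes. I expect the main obstacle to be the bookkeeping of constants and phases: the displacement phase convention, the factor of $2$ inside the $p_0$ delta function, and the consistency between $\Pi=e^{-i\pi a^\dagger a}$ and $\Pi\ket{x}=\ket{-x}$. As a check, I would verify the prefactor on the case $n=1$, where the formula must reduce to $\mathrm{Tr}\,\rho=1$, and on $n=2$, where it must reduce to $w_2=\mathrm{Tr}\,\rho^2/(2\pi)$.
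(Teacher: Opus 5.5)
Your proposal is correct and follows the same route as the paper. You compute the position kernel of $A_n$, arrive at $\frac{1}{n\pi^{n-1}}\delta(\mathbf{x}'-M_n\mathbf{x})$, and then recognize this in collective/relative coordinates as the kernel of $I\otimes\Pi^{\otimes(n-1)}$ conjugated by the passive unitary that lifts $F_n$. The only difference is bookkeeping: by keeping $x_0$ in the phase (whose sign is immaterial, since $p_0$ is integrated over $\mathbb{R}$), you get the factor $1/n$ from rescaling a one-dimensional delta function rather than from the paper's Schur-complement Jacobian $\det J_g=n$.
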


The proof can be found in \hyperref[SM_C]{SM~C}~\cite{SM}. Eq.~(\ref{eq:theorem_moment_operator}) reveals a simple geometric structure underlying all Wigner moments. In the collective-coordinate basis, one mode remains unchanged while all \((n-1)\) relative modes undergo a parity reflection. Consequently, the entire hierarchy of Wigner moments is encoded in the response of an \(n\)-copy state to parity operations acting only on the relative degrees of freedom. The complexity of the phase-space integral is thus replaced by a remarkably simple multi-copy observable. 

{{\emph{Experimental viability---}}}Beyond its conceptual significance, Theorem~\ref{theorem:moment_operator} immediately yields an operational measurement scheme. Instead of reconstructing the Wigner function, one prepares \(n\) identical copies of the state, applies the interferometric transformation \(F_n\), and measures the joint parity of the relative modes. Explicitly,
\begin{equation}
w_n
=\frac{1}{n\pi^{\,n-1}}
\left\langle
I\otimes
\Pi^{\otimes(n-1)}
\right\rangle_{F_n\rho^{\otimes n}F_n^\dagger}.
\label{eq:operational_formula}
\end{equation}
Thus every moment acquires a direct operational interpretation as a measurable multi-copy parity expectation value. This establishes the fundamental link between phase-space nonclassicality and experimentally accessible observables, providing the operational foundation for the certification criteria developed in this Letter. Further details of
an experimentally implementable protocol are discussed in \hyperref[SM_C]{SM~C}~\cite{SM}. 

The moments $\{w_m\}$ encode global properties of the phase-space distribution. Importantly, when $W$ is nonnegative, the sequence $\{w_m\}$ is constrained by a hierarchy of inequalities inherited from positive measures. Any violation of these constraints therefore certifies Wigner negativity. Below, we derive three complementary families of such constraints.

{{\emph{Hierarchy of detection criteria---}}}\emph{1. Detection hierarchy using $\mathcal{L}_p$ norms (LP).} 
Our first hierarchy is rooted in fundamental $\mathcal{L}_p$-norm inequalities obeyed by positive phase-space distributions. Expressed in terms of Wigner moments, these relations yield the following theorem.

\begin{theorem}\label{theorem: holder}
For any positive Wigner function $W$ and every integer $n\geq 2$, the following inequality holds:
\begin{equation}
LP(n) := (w_n)^{n-2} - (w_{n-1})^{n-1} \ge 0 .
\end{equation}
\end{theorem}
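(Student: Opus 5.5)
The plan is to recognise the claimed inequality as the log-convexity of $\mathcal{L}_p$ norms (Lyapunov's inequality). This applies because a positive Wigner function is a genuine probability density on $\mathbb{R}^{2N}$. The only ingredients are normalization, $w_1=\int W\,d^Nx\,d^Np=1$, and H\"older's inequality.

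First I will check that all moments are finite, so that the manipulations make sense. Every Wigner function is bounded, $|W|\le 1/\pi^N$, which follows from $W_\rho(\vec\alpha)=\mathrm{Tr}[\rho\,\Delta(\vec\alpha)]$ with $\|\Delta\|_\infty=1/\pi^N$. When $W\ge 0$ this gives $0\le W^m\le \pi^{-N(m-1)}W$. Hence $0\le w_m\le \pi^{-N(m-1)}<\infty$ for every $m\ge 1$, and positivity also guarantees $w_{n-1}\ge 0$, so raising to real powers is harmless.

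The case $n=2$ is trivial: $LP(2)=w_2^0-w_1=1-1=0$. For $n\ge 3$ I will write the exponent $n-1$ as a convex combination of $1$ and $n$:
\[
n-1=\theta\cdot 1+(1-\theta)\, n,\qquad \theta=\frac{1}{n-1},\quad 1-\theta=\frac{n-2}{n-1}.
\]
Then $W^{n-1}=W^{\theta}\,W^{(1-\theta)n}$. Applying H\"older's inequality with conjugate exponents $p=1/\theta$ and $q=1/(1-\theta)$ gives
\[
w_{n-1}=\int W^{\theta}W^{(1-\theta)n}\le\Bigl(\int W\Bigr)^{\theta}\Bigl(\int W^{n}\Bigr)^{1-\theta}=w_n^{\frac{n-2}{n-1}}.
\]
Since both sides are nonnegative, raising to the power $n-1$ yields $w_{n-1}^{\,n-1}\le w_n^{\,n-2}$, which is $LP(n)\ge 0$.

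There is no real obstacle here. The only points needing care are the use of nonnegativity of $W$, without which H\"older cannot be applied to $W^\theta$ and the sign of $w_{n-1}$ is uncontrolled, and the finiteness of the moments. Nonnegativity is precisely the hypothesis whose violation the witness detects.
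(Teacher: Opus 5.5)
Your proof is correct and is essentially the paper's argument. The paper applies H\"older to $f=W^{n-2}$, $g=1$ with respect to the probability measure $d\mu=W\,d^Nx\,d^Np$, with exponents $p=\frac{n-1}{n-2}$ and $q=n-1$; this is exactly your split $W^{n-1}=W^{\theta}W^{(1-\theta)n}$ with one factor of $W$ absorbed into the measure. Your separate treatment of $n=2$, where the paper's exponent $\frac{n-1}{n-2}$ is undefined, and your finiteness check via $|W|\le\pi^{-N}$ are small gains in care over the paper's version.
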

Any violation of the above inequality implies Wigner negativity. The proof follows from H\"older's inequality in $\mathcal{L}_p$ spaces, applied to the probability measure $d\mu = W\,dx^Ndp^N$ (valid when $W \ge 0$); see \hyperref[SM_D]{SM~D}~\cite{SM} for details.\qed

While Theorem \ref{theorem: holder} establishes a systematic set of constraints relating consecutive moments, each inequality involves only two moments. This motivates the search for stronger consistency conditions involving multiple moments simultaneously.

\emph{2. Detection hierarchy using log-convexity (LC) condition.}
Beyond the $\mathcal{L}_p$-norm hierarchy, positivity of the Wigner function also imposes log-convexity constraints on the moment sequence. This yields the following condition.
\begin{theorem} \label{theorem: cauchy}
Let $W$ be a positive Wigner function. Then the moments $\{w_n\}$ defined in Eq. \eqref{eq:wigner_moments} satisfy
\begin{equation}
LC(n) = w_{n-1} w_{n+1} -w_n^2 \ge 0.
\end{equation}
\end{theorem}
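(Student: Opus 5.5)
We obtain the inequality $LC(n)\ge 0$ directly from the Cauchy--Schwarz inequality in $\mathcal{L}_2(\mathbb{R}^{2N})$, splitting the integrand of $w_n$ symmetrically between the neighbouring exponents $n-1$ and $n+1$. Since $W\ge 0$ by assumption, all real powers $W^{s}$ with $s\ge 0$ are well defined and nonnegative. We therefore write
\begin{equation*}
w_n=\int_{\mathbb{R}^{2N}} W^{n}\, d^N x\, d^N p=\int_{\mathbb{R}^{2N}} W^{\frac{n-1}{2}}\, W^{\frac{n+1}{2}}\, d^N x\, d^N p .
\end{equation*}
Applying Cauchy--Schwarz to $f=W^{(n-1)/2}$ and $g=W^{(n+1)/2}$ gives
\begin{equation*}
w_n^2\le \int W^{n-1}\, d^N x\, d^N p\int W^{n+1}\, d^N x\, d^N p = w_{n-1}w_{n+1},
\end{equation*}
which is exactly $LC(n)\ge 0$.

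We should check that every moment involved is finite, so that the inequality is meaningful for $n\ge 2$. The case $w_1=1$ is normalization. For $m\ge 2$, a Wigner function is bounded, $|W|\le (1/\pi)^N$ (with the normalization used in the paper, from $|\mathrm{Tr}[\rho\,\Delta]|\le \|\Delta\|_\infty$). Together with $W\in\mathcal{L}_1$ (a consequence of $W\ge0$ and normalization), this gives $W\in\mathcal{L}_m$ for every $m\ge1$. Hence $w_{n-1}$, $w_n$ and $w_{n+1}$ are all finite.

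An alternative route, which is essentially the same argument, is the one behind Theorem~\ref{theorem: holder}. Let $\phi(s)=\log\int W^{s}$. The function $\phi$ is convex in $s$ by H\"older's inequality with exponents $2,2$ (interpolating $s=n$ between $n-1$ and $n+1$), and midpoint convexity $\phi(n)\le \tfrac12[\phi(n-1)+\phi(n+1)]$ is the claimed inequality. This also explains the name ``log-convexity''.

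The main point needing care is the positivity assumption, which is what allows the fractional powers $W^{(n\pm1)/2}$ to be real. For a Wigner function that takes negative values, odd-order moments can be arbitrarily small or negative relative to the even ones, and the splitting above fails. That is why a violation of $LC(n)\ge 0$ certifies negativity. Otherwise the step is routine, and equality holds only if $W^{(n-1)/2}$ and $W^{(n+1)/2}$ are proportional, that is, only if $W$ is constant on its support.
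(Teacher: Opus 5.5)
Your proposal is correct and is essentially the paper's proof: the paper applies Cauchy--Schwarz to $W^{(n-2)/2}$ and $W^{n/2}$ with respect to the probability measure $d\mu = W\,d^Nx\,d^Np$, which after absorbing the density is exactly your splitting $W^{n}=W^{(n-1)/2}\,W^{(n+1)/2}$ under Lebesgue measure. Your finiteness check, via $|W|\le \pi^{-N}$ together with $W\in\mathcal{L}_1$, is a small addition that the paper leaves implicit.
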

The proof of this Theorem can be found in SM D~\cite{SM}. All the hierarchies proposed so far do not necessarily provide strictly stronger criteria for detecting Wigner negativity, as they are based on a restricted subset of moments rather than the complete moment sequence and hence fail to fully capture the information contained in the complete moment sequence.

\emph{3. Moment-matrix approach via Hankel positivity.} A stronger approach is to consider all moments up to a given order simultaneously through Hankel matrices (moment matrices), which play a central role in classical moment problems (see \hyperref[SM_D]{SM~D}~\cite{SM} for a brief discussion). This leads to a hierarchy of moment-matrix based criteria that strictly strengthen the previously obtained moment inequalities.

\begin{theorem} \label{theorem:wigner_theorem}
For any positive Wigner function $W$ associated with a $N$-mode CV state $\rho$, the Hankel matrix $H_n(\boldsymbol{w})$ with elements
\begin{equation}
    [H_n(\boldsymbol{w})]_{ij} = w_{i+j+1},
    \qquad i,j \in \{0,1,\dots,n\}.
    \label{eq:wigner_hankel}
\end{equation}
is positive semidefinite for all $n \in \mathbb{N}$.
\end{theorem}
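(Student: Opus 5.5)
The plan is to write the Hankel matrix as a Gram matrix with respect to the positive measure defined by $W$. Since $W\ge 0$, the measure $d\mu = W\, d^Nx\, d^Np$ on $\mathbb{R}^{2N}$ is a probability measure, because the Wigner function is normalized. The key observation is that
\begin{equation}
w_{i+j+1} = \int_{\mathbb{R}^{2N}} W^{i+j+1}\, d^Nx\, d^Np = \int_{\mathbb{R}^{2N}} W^{i}\, W^{j}\, d\mu ,
\end{equation}
so the entries of $H_n(\boldsymbol{w})$ are the moments $\int t^{i+j}\, d\nu(t)$ of the pushforward measure $\nu = W_{*}\mu$ on $[0,\infty)$. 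In other words, $H_n(\boldsymbol{w})$ is an ordinary Hamburger-type Hankel matrix of a positive measure on the real line.

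Given this representation, I take an arbitrary vector $c=(c_0,\dots,c_n)\in\mathbb{R}^{n+1}$ and compute the quadratic form:
\begin{equation}
c^{T} H_n(\boldsymbol{w})\, c = \sum_{i,j=0}^{n} c_i c_j \int W^{i+j}\, d\mu = \int_{\mathbb{R}^{2N}} \Bigl(\sum_{i=0}^{n} c_i W^{i}\Bigr)^{2} W\, d^Nx\, d^Np \ \ge\ 0 .
\end{equation}
Positivity follows because the integrand is a square multiplied by $W\ge0$. Since $H_n$ is real symmetric, this already gives positive semidefiniteness, and complex vectors reduce to the real case.

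The one step that needs care is showing that all the integrals $w_m$, $m\ge1$, are finite, so that the exchange of the finite sum with the integral is legitimate. I would handle this with the standard bound $|W(\vec x,\vec p)|\le \pi^{-N}$, which follows from $W_\rho(\vec\alpha)=\mathrm{Tr}[\rho\,\Delta(\vec\alpha)]$ together with $\|\Delta\|_\infty=\pi^{-N}$. This gives $0\le W^{m}\le \pi^{-N(m-1)}W$, and hence $w_m\le \pi^{-N(m-1)}<\infty$ for every $m$.

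With finiteness established, the computation above holds for every $n\in\mathbb{N}$. As an optional remark, the $2\times2$ principal minors $w_{2k+1}w_{2l+1}-w_{k+l+1}^2\ge0$ recover log-convexity of the form of Theorem~\ref{theorem: cauchy}, which explains the claim that this criterion strengthens the earlier ones.
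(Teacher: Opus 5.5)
Your proof is correct and is essentially the paper's argument: the paper also expands $\boldsymbol{y}^T H_n(\boldsymbol{w})\boldsymbol{y}$ as $\int W\bigl(\sum_i y_i W^i\bigr)^2\, d^Nx\,d^Np \ge 0$ and concludes from $W\ge 0$. Your bound $|W|\le \pi^{-N}$, which gives $w_m \le \pi^{-N(m-1)}<\infty$, adds the finiteness of the moments that the paper assumes without comment when it cites Fubini to exchange sum and integral.
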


For the proof of Theorem~\ref{theorem:wigner_theorem}, we refer to \hyperref[SM_D]{SM~D}~\cite{SM}. Since these Hankel matrices are naturally nested,  each higher-order matrix incorporates all lower-order moments. This nested structure leads to a monotonic hierarchy of constraints, where higher-order conditions are strictly stronger than the lower-order ones. The following corollary formalizes this property.

\begin{corollary}[Monotonicity of the Hankel hierarchy]
\label{coro:monotonicity}
For the Hankel matrices $H_n(\boldsymbol{w})$ defined in Eq.~\eqref{eq:wigner_hankel},
\begin{align}
H_n(\boldsymbol{w}) \succeq 0 \;&\Rightarrow\; H_m(\boldsymbol{w}) \succeq 0,\quad \forall \; m <n, \\  
H_n(\boldsymbol{w}) \not\succeq 0 \;&\Rightarrow\; H_m(\boldsymbol{w}) \not\succeq 0,\quad \forall\, m\ge n.
\end{align}
\end{corollary}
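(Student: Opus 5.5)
The plan is to use the nested structure of the Hankel matrices. By Eq.~\eqref{eq:wigner_hankel}, for $m<n$ the matrix $H_m(\boldsymbol{w})$, with entries $w_{i+j+1}$ for $i,j\in\{0,\dots,m\}$, coincides exactly with the leading $(m+1)\times(m+1)$ principal submatrix of $H_n(\boldsymbol{w})$. This holds because each entry depends only on the index sum $i+j$ and not on the size of the matrix. Once this is recorded, both implications follow from the standard fact that principal submatrices of a positive semidefinite matrix are positive semidefinite.

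For the first implication, assume $H_n(\boldsymbol{w})\succeq 0$ and take any $m<n$. For an arbitrary vector $v\in\mathbb{R}^{m+1}$ (or $\mathbb{C}^{m+1}$), form the padded vector $\tilde v=(v,0,\dots,0)\in\mathbb{R}^{n+1}$. Then
\begin{equation}
v^{\dagger}H_m(\boldsymbol{w})\,v=\tilde v^{\dagger}H_n(\boldsymbol{w})\,\tilde v\ge 0 ,
\end{equation}
because the zero-padding kills every entry of $H_n$ with an index exceeding $m$. The remaining block is precisely $H_m$ by the index-sum structure. Hence $H_m(\boldsymbol{w})\succeq 0$. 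Note that the moments are real because $W$ is real, so $H_n$ is a real symmetric matrix and its quadratic forms are well defined.

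The second implication is the contrapositive of the first with the roles of the indices relabelled. Suppose $H_n(\boldsymbol{w})\not\succeq 0$ and $m\ge n$. The case $m=n$ is trivial. If $m>n$ and we had $H_m(\boldsymbol{w})\succeq 0$, the first part applied with the pair $(m,n)$ would force $H_n(\boldsymbol{w})\succeq 0$, which is a contradiction. Equivalently, a witness vector $v$ with $v^{\dagger}H_n v<0$ can be zero-padded to a witness for every larger $H_m$.

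There is no real obstacle here. The only point needing care is verifying that the embedding is exact, namely that $[H_m]_{ij}=[H_n]_{ij}$ for $0\le i,j\le m$. This also requires that all moments $w_1,\dots,w_{2n+1}$ are finite. For physical states, $|W|\le 1/\pi^N$ together with $\int |W|<\infty$ (or $W\in L^2$) makes every $w_k$ with $k\ge 2$ finite, and $w_1=1$. With that checked, the monotonicity is immediate.
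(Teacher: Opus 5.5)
Your proposal is correct and follows the paper's proof: the forward implication uses that $H_m$ is the leading principal submatrix of $H_n$, and your zero-padding computation just spells out the principal-submatrix fact the paper cites. The reverse implication zero-pads a vector $\boldsymbol{y}$ with $\boldsymbol{y}^T H_n \boldsymbol{y}<0$ to a witness for every $H_m$ with $m>n$, exactly as the paper does. Your finiteness remark is a harmless extra; of the two routes you give, only boundedness $|W|\le \pi^{-N}$ together with $W\in L^2$ works for every state, since a Wigner function need not be absolutely integrable.
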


Thus the Hankel criteria form a strictly increasing hierarchy of constraints, where higher-order matrices provide progressively stronger conditions for detecting Wigner negativity,
\begin{equation}
    H_1 \preceq H_2 \preceq H_3 \, \cdots.
\end{equation}

The lowest-order members of the three hierarchies coincide, yielding the criterion $\Gamma_{3} =w_2^2-w_3>0$ for the detection of Wigner negativity. This criterion is consistent with the previously established third-moment witness of Ref.~\cite{mallick_2025}.

To assess the performance of the proposed criteria, we analyze mixed Fock (Eq.~\ref{fock_mixed}) and noisy cat states (Eq.~\ref{cat_mixed}) in Table~\ref{tab:fock_mixture} and Fig.~\ref{fig:cat}, respectively. We compare the Wigner-negative regions detected by the LP, LC, and Hankel hierarchies for the above states. In both cases, the Hankel criteria exhibit a systematic improvement with increasing order and detect substantially larger portions of the nonclassical region, eventually recovering almost the full negativity range. A general comparison of the strength of different criteria and their detection performance is provided in the End Matter (see SM D~\cite{SM} for further discussion).

\begin{table}[t]
\centering
\begin{tabular}{c c c c c}
\hline\hline
$\quad n \quad $&
$\quad LP(2n+1)<0 \quad$ &
$\quad LC(2n)<0\quad$ &
$\quad H_n \not\succeq 0\quad $ \\
\hline
1 & $0<\lambda<0.309$ & $0<\lambda<0.309$ & $0 < \lambda < 0.309$ \\
2 & $0<\lambda<0.292$ & $0<\lambda<0.362$ & $0 < \lambda < 0.419$ \\
3 & $0<\lambda<0.294$ & $0<\lambda<0.364$ & $0 < \lambda < 0.469$ \\
4 & $0<\lambda<0.297$ & $0<\lambda<0.361$ & $0 < \lambda < 0.5$ \\
\hline\hline
\end{tabular}
\caption{ 
\justifying
\small
Detection ranges of $\lambda$ for Wigner negativity of the mixed Fock state: $\rho = \lambda |0\rangle\langle 0| + (1-\lambda) |1\rangle\langle 1|$, using different moment-based criteria. The parameter $n$ ensures that all criteria involve moments up to order $2n+1$. The Hankel-matrix conditions show a systematic and monotonic improvement with increasing $n$, eventually capturing the full nonclassical region, whereas the $\mathcal{L}_p$-norm (LP) and log-convexity (LC) based criteria lack such monotonicity, with higher-order conditions not necessarily yielding stronger detection.}
\label{tab:fock_mixture}
\end{table}

\begin{figure}
    \centering
        \includegraphics[width=0.95\linewidth]{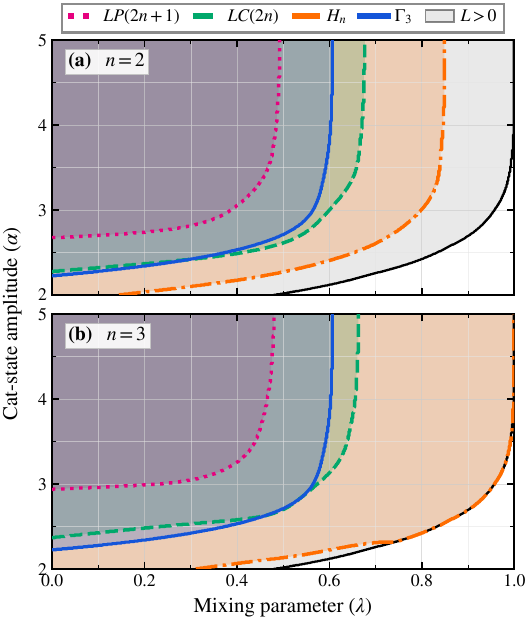}
    \caption{
\small
\justifying
Parameter regions exhibiting violations of different moment-based nonclassicality criteria for the mixed even cat state (see End matter for the details of this state). The coherent amplitude is taken in the range $\alpha\in[2,5]$, and $0\leq\lambda\leq1$ is the mixing parameter. 
(a) Comparison of \(LP(5)\), \(LC(4)\), and \(H_2\). (b) Comparison of \(LP(7)\), \(LC(6)\), and \(H_3\). The Hankel criteria detect the largest portion of the nonclassical region, followed by the log-convexity and \(\mathcal{L}_p\)-norm conditions. $L \equiv L(W)$ is the measure of nonclassicality.}
\label{fig:cat}
\end{figure}

{{\emph{Operational quantifier of Wigner negativity---}}}While the criteria developed above provide a practical and efficient way to identify Wigner negativity, they do not directly quantify its magnitude. Below we show that the same moments can also be used to yield certifiable lower bounds on the logarithmic Wigner negativity, a widely used measure of phase-space nonclassicality~\cite{PhysRevA.98.052350}. It is defined as:
\begin{equation}
L(W)
=\log \,\!\left(
\int |W(\vec{x},\vec{p})|
\, d^N x\, d^N p
\right),
\label{eq:logWignerNeg}
\end{equation}
which measures the total absolute volume contained in the Wigner distribution. However, evaluating Eq.~\eqref{eq:logWignerNeg} generally requires knowledge of the full Wigner function, making direct experimental estimation challenging. To connect logarithmic Wigner negativity with the moment framework developed above, we introduce the absolute-valued Wigner moments
\begin{equation}
\tilde{w}_r
=\int
|W(\vec{x},\vec{p})|^r
\, d^N x\, d^N p,
\qquad r\in\mathbb{N}.
\label{eq:absolute_moments}
\end{equation}
These quantities contain information about the distribution of negativity in phase space and satisfy $L(W)=\log(\tilde{w}_1)$. Now we define the family of moment-based quantities
\begin{equation}
L_r(W)
=\frac{1}{2-r}
\Big[
\log(\tilde{w}_r)
+
(1-r)\log(\tilde{w}_2)
\Big],
\quad r>2.
\label{eq:Lr}
\end{equation}
An important feature of this construction is that for every even integer \(r=2n\),
$\tilde{w}_{2n}=w_{2n}$.
Consequently, the quantities \(L_{2n}(W)\) can be evaluated directly from the experimentally accessible Wigner moments. Moreover, these quantities can be directly obtained from the same measurement data obtained while detecting Wigner negativity. The significance of \(L_r(W)\) is established by the following theorem.
\begin{theorem}[Moment-based lower bounds] \label{theorem:lower_bound_of_L}
For every quantum state and every \(r>2\),
\begin{equation}
 L(W) \ge L_r(W).
\end{equation}
\end{theorem}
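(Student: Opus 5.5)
This is log-convexity of $r\mapsto \log \tilde w_r = \log\|W\|_r^r$ (Lyapunov's interpolation inequality), applied at the three exponents $1<2<r$. Write $f=|W|\ge 0$, which is integrable, with $\tilde w_1 = e^{L(W)}$. For $r>2$ choose $\theta\in(0,1)$ with $2=\theta\cdot 1+(1-\theta) r$, that is,
\begin{equation}
\theta=\frac{r-2}{r-1},\qquad 1-\theta=\frac{1}{r-1}.
\end{equation}
Split $f^2=f^{\theta}\,f^{(1-\theta)r}$ and apply H\"older's inequality with conjugate exponents $p=1/\theta$ and $q=1/(1-\theta)$. This gives
\begin{equation}
\tilde w_2=\int f^{\theta} f^{(1-\theta)r}\le \Big(\int f\Big)^{\theta}\Big(\int f^r\Big)^{1-\theta}=\tilde w_1^{\frac{r-2}{r-1}}\,\tilde w_r^{\frac{1}{r-1}}.
\end{equation}
This is the same H\"older mechanism used for Theorem~\ref{theorem: holder}. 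The difference is that it is now applied to $|W|$ with Lebesgue measure, so no positivity assumption on $W$ is needed.

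Next, take logarithms; all quantities are strictly positive for a nonzero Wigner function. Multiplying by $r-1>0$ gives
\begin{equation}
(r-1)\log\tilde w_2\le (r-2)\log\tilde w_1+\log\tilde w_r .
\end{equation}
Since $r-2>0$, this rearranges to
\begin{equation}
\log\tilde w_1\ge \frac{1}{r-2}\big[(r-1)\log\tilde w_2-\log\tilde w_r\big].
\end{equation}
The right-hand side is exactly $\frac{1}{2-r}\big[\log\tilde w_r+(1-r)\log\tilde w_2\big]=L_r(W)$. Because $\tilde w_1=e^{L(W)}$, this proves $L(W)\ge L_r(W)$.

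Finally, the finiteness of $\tilde w_r$ and $\tilde w_2$ needs a check. For any state, $|W|\le \pi^{-N}$, which follows from $\|\Delta(\vec\alpha)\|_\infty=\pi^{-N}$. Also $\tilde w_2=w_2=\mathrm{Tr}\rho^2/(2\pi)^N$ is finite. Hence $\tilde w_r\le \pi^{-N(r-2)}\tilde w_2<\infty$ for every $r>2$. I expect the only delicate step is this integrability bookkeeping and the sign handling when dividing by $2-r<0$; the inequality itself is a single application of H\"older's inequality.
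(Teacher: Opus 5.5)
Your proof is correct and is essentially the paper's. Both reduce the claim to $\tilde w_2^{\,r-1}\le\tilde w_1^{\,r-2}\tilde w_r$ via the Lyapunov (interpolation) inequality at exponents $1<2<r$, which you derive directly from H\"older where the paper simply cites it, and then take logarithms with the sign flip from $2-r<0$. One small correction to your bookkeeping: $|W|$ is not integrable for every state (for $\psi=\chi_{[0,1]}$ one finds $W=\sin(2mp)/(\pi p)$ with $m=\min(x,1-x)$, so $\int|W|=\infty$), but then $L(W)=+\infty$ and the bound is trivial, while your estimate $\tilde w_r\le\pi^{-N(r-2)}\tilde w_2<\infty$ ensures $L_r(W)$ is always finite.
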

The proof is provided in \hyperref[SM_E]{SM~E}~\cite{SM}. The theorem shows that low-order Wigner moments alone are sufficient to construct experimentally accessible lower bounds to  the logarithmic Wigner negativity. The quantities \(L_r(W)\) also retain a direct witnessing interpretation.

\begin{corollary}[Quantitative witness]\label{corollary:L_r_as_witness}
If $L_r(W)>0$, then the corresponding Wigner function must be negative in some region of phase space.
\end{corollary}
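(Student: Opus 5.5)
The corollary follows by combining Theorem~\ref{theorem:lower_bound_of_L} with the observation that a nonnegative Wigner function has $L(W)=0$. The argument is by contraposition: assume $W\ge 0$ everywhere and show $L_r(W)\le 0$ for every $r>2$.

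\textbf{Step 1.} Suppose $W\geq 0$ on all of phase space. Then $|W|=W$, and normalization gives
\[
\tilde{w}_1=\int |W|\,d^Nx\,d^Np=\int W\,d^Nx\,d^Np=1,
\qquad\text{so}\qquad L(W)=\log 1=0 .
\]

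\textbf{Step 2.} Theorem~\ref{theorem:lower_bound_of_L} gives $L_r(W)\le L(W)=0$ for every $r>2$. Hence $L_r(W)>0$ is impossible when $W\ge 0$. If $L_r(W)>0$, then $W$ must take negative values on a set of positive measure, which proves the claim.

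\textbf{Direct check of Step 2.} For robustness, this can also be verified without invoking the theorem. With $W\ge0$, $d\mu=W\,d^Nx\,d^Np$ is a probability measure. Jensen's inequality (equivalently, log-convexity of $s\mapsto \log\int W^{s}$, as in Theorem~\ref{theorem: cauchy}) gives
\[
\log\tilde w_2\le \frac{1}{r-1}\log \tilde w_r .
\]
Equivalently, $(1-r)\log\tilde w_2+\log\tilde w_r\ge 0$. Dividing by $2-r<0$ reverses the inequality, giving $L_r(W)\le 0$.

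The only delicate point is this sign bookkeeping, since the prefactor $1/(2-r)$ is negative for $r>2$; everything else is immediate.
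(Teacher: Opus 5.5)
Your proposal is correct and matches the paper's proof: it argues by contraposition, noting that $W\ge 0$ forces $\tilde w_1=1$ and hence $L(W)=0$, then invokes Theorem~\ref{theorem:lower_bound_of_L} to get $L_r(W)\le L(W)=0$. Your Jensen-based ``direct check'' ($\tilde w_2^{\,r-1}\le \tilde w_r$ under the probability measure $W\,d^Nx\,d^Np$) is also valid. It is simply the interpolation inequality from the proof of Theorem~\ref{theorem:lower_bound_of_L} specialized to $\tilde w_1=1$, so it gives a self-contained confirmation rather than a new route.
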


Thus every positive value of \(L_r(W)\) simultaneously certifies Wigner negativity and provides a nontrivial lower bound on the total negativity quantified by \(L(W)\). Finally, among all experimentally accessible members of this hierarchy, the lowest-order quantifier yields the strongest certifiable bound.

\begin{proposition}[Optimality of \(L_4\)]\label{proposition:optimality_of_L_4}
\begin{equation}
L_4(W)\ge L_r(W), \qquad \forall \; r>4.
\end{equation}
\end{proposition}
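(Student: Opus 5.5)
The plan is to reduce the claim to a monotonicity statement about a single convex function of $r$. Define
\begin{equation}
f(r) := \log \tilde{w}_r = \log \int |W(\vec{x},\vec{p})|^r \, d^N x\, d^N p, \qquad r\ge 1 .
\end{equation}
First I would check that $f$ is well defined and finite. The Wigner function of any state is bounded, $|W|\le 1/\pi^N$, because it is the expectation value of the displaced parity operator divided by $\pi^N$. It is also integrable, since $\tilde{w}_1<\infty$ whenever $L(W)$ is finite, which is the setting of Theorem~\ref{theorem:lower_bound_of_L}. Hence $0<\tilde{w}_r\le \pi^{-N(r-1)}\tilde{w}_1<\infty$ for all $r\ge1$, and strict positivity holds because $W\not\equiv 0$.

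Next I would rewrite $L_r$ in terms of $f$. Splitting $(1-r)f(2)=-f(2)+(2-r)f(2)$ in Eq.~\eqref{eq:Lr} gives
\begin{equation}
L_r(W) = f(2) - \frac{f(r)-f(2)}{r-2}, \qquad r>2 .
\end{equation}
So $L_r$ equals the constant $f(2)$ minus the slope $s(r):=\bigl(f(r)-f(2)\bigr)/(r-2)$ of the chord of $f$ between the points $2$ and $r$. The proposition $L_4\ge L_r$ for $r>4$ is therefore equivalent to $s(4)\le s(r)$ for $r>4$. It suffices to show that $s$ is nondecreasing on $(2,\infty)$.

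The key step is the convexity of $f$, which is the Lyapunov (log-convexity) inequality. It follows from H\"older's inequality, in the same spirit as the proofs of Theorems~\ref{theorem: holder} and \ref{theorem: cauchy}, but applied here to $|W|$ so that no positivity assumption on $W$ is needed. Take $a,b\ge1$, $\theta\in[0,1]$ and $r=\theta a+(1-\theta)b$. Write $|W|^r=|W|^{\theta a}|W|^{(1-\theta)b}$ and use H\"older with exponents $1/\theta$ and $1/(1-\theta)$:
\begin{equation}
\tilde{w}_r \le \tilde{w}_a^{\,\theta}\,\tilde{w}_b^{\,1-\theta}.
\end{equation}
Taking logarithms gives $f(r)\le \theta f(a)+(1-\theta)f(b)$, so $f$ is convex. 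For a convex function, the chord slope from a fixed base point is nondecreasing in the other endpoint (the standard three-chord lemma). Applying this with base point $2$ shows that $s(r)$ is nondecreasing, hence $L_r$ is nonincreasing in $r$, and in particular $L_4(W)\ge L_r(W)$ for all $r>4$.

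The same argument shows more generally that $L_{r}\ge L_{r'}$ whenever $2<r<r'$. Among the experimentally accessible members $r=2n$ (where $\tilde w_{2n}=w_{2n}$), $L_4$ is therefore the best. I do not expect a serious obstacle. The only point needing care is finiteness and strict positivity of $\tilde{w}_r$, so that the logarithms are defined and H\"older applies; both follow from the boundedness and integrability of $W$ noted above.
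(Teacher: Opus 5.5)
Your proof is correct, but it uses a different key lemma from the paper's. The paper introduces the normalized density $|W|^2/\tilde w_2$ and writes $L_r$ as $\tfrac12 R_{r/2}$ plus an $r$-independent term. It then invokes monotonicity of R\'enyi entropies in the order, which it proves by differentiating: $\frac{d}{dr}R_r=-(1-r)^{-2}D(p_r\|f)\le 0$. You instead prove convexity of $r\mapsto\log\tilde w_r$ directly from H\"older's inequality (Lyapunov's inequality). You then read $L_r=f(2)-s(r)$ as a constant minus a chord slope, and the three-chord lemma finishes the argument.

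At the core the two routes are equivalent: R\'enyi monotonicity is itself the three-chord lemma for the convex function $q\mapsto\log\int f^q$ with base point $q=1$. Your version nonetheless has several advantages:
\begin{itemize}
\item It needs no differentiation under the integral sign, which the paper's ``straightforward algebra'' tacitly assumes.
\item It works from the definition of $L_r$ directly. This avoids the paper's slip in the identity $L_r=\tfrac12(R_{r/2}-\log\tilde w_2)$; the correct constant is $+\tfrac12\log\tilde w_2$, which is harmless for monotonicity.
\item The same convexity also gives Theorem~\ref{theorem:lower_bound_of_L}. Indeed $L=f(1)\ge L_r$ says exactly that the chord slope on $[1,2]$ is at most $s(r)$, so both results follow from one fact.
\end{itemize}
In exchange, the paper's route gives an entropic interpretation and an explicit expression for the rate of decrease.

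One small simplification: you do not need $\tilde w_1<\infty$ for this proposition. Only $r\ge 2$ enters, and $\tilde w_r\le \pi^{-N(r-2)}\tilde w_2$ with $\tilde w_2=w_2=\mathrm{Tr}\rho^2/(2\pi)^N$, which is finite for every state.
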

Therefore \(L_4(W)\) provides the tightest lower bound to logarithmic Wigner negativity obtainable from experimentally accessible even ordered quantifiers defined in Eq.~\eqref{eq:Lr}. 
The proofs of Corollary~\ref{corollary:L_r_as_witness} and Proposition~\ref{proposition:optimality_of_L_4} along with further properties of $L_r$, signifying it as a genuine measure of nonclassicality,  can be found in \hyperref[SM_E]{SM~E}~\cite{SM}.


 {{\emph{Advantage in nonclassicality certification using fewer copies---}}}The moment-based framework is designed to certify Wigner negativity without reconstructing the complete phase-space distribution. A natural operational question is whether these moments can indeed be estimated with substantially fewer state copies than conventional Wigner tomography. Since the moments \(w_n\) are expressed as multicopy nonlinear functions (see Eq.~\eqref{eq:wn_multicopy}) of the density operator, they can be estimated efficiently using classical-shadow techniques \cite{huang2020predicting}. To address the question on copy efficiency, we numerically benchmark the shadow estimator against direct Wigner tomography under the same total copy budget for two classes of single mode noisy states. In the tomographic approach, the available copies are distributed over a phase-space grid, displaced-parity measurements are used to reconstruct the sampled Wigner function, and then $\Gamma_3=w_2^2-w_3$ is evaluated only after reconstruction. By contrast, the shadow protocol estimates the same nonlinear functions directly from randomized measurements using cross-fitted U-statistics, completely bypassing the reconstruction of \(W(x,p)\).

We compare the copy efficiency with Wigner tomography in Fig.~\ref{fig:shadow_vs_tomo} (further numerical details are provided in the End Matter). The comparison reveals a clear copy advantage of moment based certification. With only \(10^4\) copies, the shadow estimator already reaches a mean additive error below \(14\%\) of the exact value of \(|\Gamma_3|\) for both representative states. The corresponding errors are approximately \(12.3\%\) and \(13.3\%\) for the mixed Fock and mixed odd cat states (see End matter for the details of these states), whereas Wigner tomography still exhibits errors of approximately \(339\%\) and \(5089\%\), respectively. Even after increasing the copy budget to \(5\times10^4\), tomography remains significantly less accurate, with errors of approximately \(62\%\) and \(390\%\), while the shadow estimator has already reached the \(5\%\) level in both examples. These results demonstrate that the observed copy advantage arises from directly estimating the experimentally relevant Wigner moments rather than reconstructing the complete phase-space distribution, thereby providing an operational justification for the few-copy paradigm advocated in this Letter.

Further, Fig.~\ref{fig:wigner_shadow} demonstrates that the shadow protocol efficiently recovers both the detection witness \(\Gamma_3\) and the quantitative bound \(L_4\) from only \(500\) classical shadows for Fock states $( n=1,2,\dots,10)$, highlighting the copy efficiency of the moment based approach.

\begin{figure}[t]
    \centering
    \includegraphics[width=0.95\linewidth]{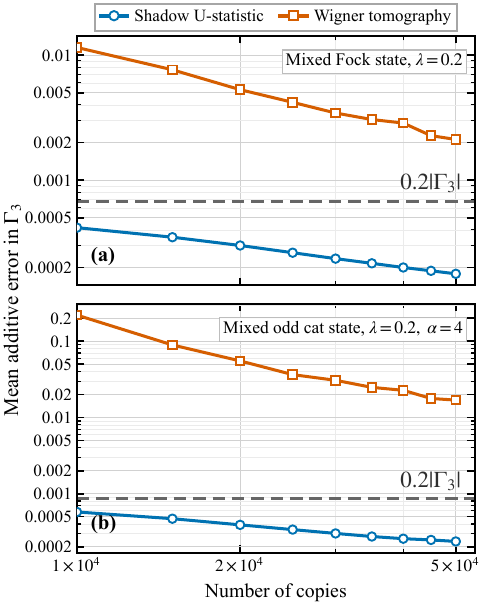}
    \caption{
\small
\justifying
The mean additive error in \(\Gamma_3\) obtained from the shadow estimator and from direct Wigner tomography, using the same total number of state copies. The dashed line marks \(20\%\) of the exact value of \(|\Gamma_3|\). The shadow estimator reaches this accuracy scale with far fewer copies $(\mathcal{N}<10^4 )$, whereas Wigner tomography does not reach it even with $5\times10^4$ copies.
}
\label{fig:shadow_vs_tomo}
\end{figure}

\begin{figure}
    \centering
    \includegraphics[width=0.9\linewidth]{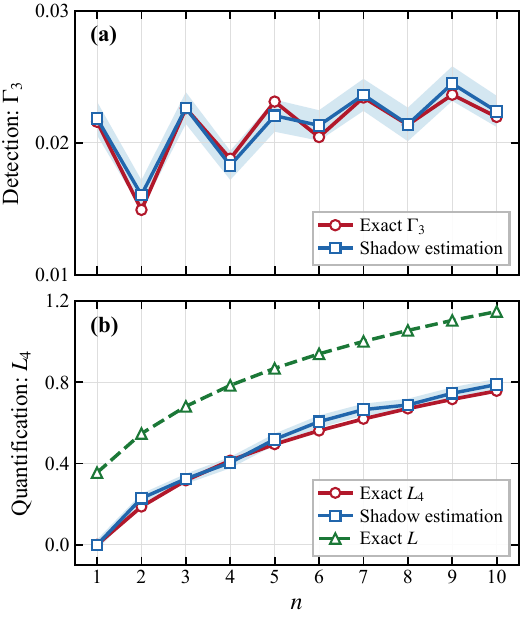}
   \caption{
\small
\justifying
Classical-shadow estimation for Fock states \(|n\rangle\). 
(a) \emph{Detection.} \(\Gamma_3\) obtained using $500$ classical shadows compared with the exact values. 
(b) \emph{Quantification.} A lower bound on the logarithmic Wigner negativity ($L$) is provided by $L_4$. Points show mean shadow estimates and shaded bands represent $95\%$ confidence intervals from \(100\) independent realizations.
}
\label{fig:wigner_shadow}
\end{figure}


{ \emph{ Conclusions---}}In this work, we have developed an operational framework for certifying 
Wigner negativity in arbitrary CV quantum states from a remarkably modest number of state copies, using global phase-space quantities called the Wigner moments. We have established that each Wigner moment admits an exact multicopy realization as the expectation value of a multi-copy parity observable, thereby transforming nonlinear functionals of the Wigner function into directly measurable observables. A bonafide and experimentally measurable quantifier of nonclassicality is constructed using the above
Wigner moments, providing a lower bound on the logarithmic Wigner negativity.

The formalism developed here shifts the problem of certification of Wigner negativity from reconstruction of the Wigner function in phase space to directly estimating a small set of phase-space observables. Benchmarking against conventional Wigner tomography demonstrates that direct moment estimation accomplishes the same certification task with substantially fewer state copies, thereby establishing a clear operational advantage over full state reconstruction.

Our results establish Wigner moments as a unifying operational framework for phase-space nonclassicality, bridging mathematical positivity constraints, quantitative resource measures, and experimentally accessible observables. From a broader perspective, since negativity of Wigner functions serves as a resource witness in diverse realms of quantum information~\cite{booth_contextuality,eisert2002distilling,eisert2012positivewigner}, the present approach can be naturally applied to provide a general route 
for experimentally certifying a wide range of resources directly through moments, including the certification of bipartite NPT entanglement~\cite{Zaw2024_bipartite_entanglement} and genuine multipartite entanglement~\cite{zaw2025_GME_1,zaw2025_GME_2} (see \hyperref[SM_F]{SM~F}~\cite{SM} for illustration). 

Our results open several promising directions for future research. On the theoretical side, an important challenge is to identify optimal and complete moment characterizations of Wigner-positive states and, more generally, to understand how the full hierarchy of Wigner moments captures the geometry of the Wigner-positive set. On the operational side, it will be important to establish the fundamental copy complexity of moment-based certification, including its scaling with the moment order and Hilbert-space truncation. 
Finally, developing statistically robust and loss-tolerant moment-based certification protocols for realistic experimental platforms represents an inevitable extension of our present analysis. 

{\emph{Acknowledgements.--}}
BM acknowledges DST INSPIRE fellowship program for financial support. AGM acknowledges the financial support provided by the IIT Goa Startup Grant (No. 2025/SG/AGM/058).


\newpage
\clearpage
\appendix
\begin{center}
    \textbf{\large{END MATTER}}
\end{center}
\vspace{0.2cm}
\begin{center}
    \textbf{Appendix I: Wigner function for Fock and Cat states}
\end{center}

For completeness, we collect here the explicit Wigner functions of the representative states used as examples throughout the main text. 

\textbf{1. Fock states.} The Fock states of a harmonic oscillator are defined as $|m\rangle \equiv \frac{(a^\dagger)^m}{\sqrt{m!}} |0\rangle$ with Wigner function, 
\begin{equation}
    W_{|m\rangle}(x,p)
    =
    \frac{(-1)^m}{\pi}
    e^{-(x^2+p^2)}
    \mathcal{L}_m\!\left(2(x^2+p^2)\right), \nonumber
\end{equation}
where $\mathcal{L}_m(z)$ denotes the $m$-th Laguerre polynomial. While the vacuum state $|0\rangle$ is Gaussian and therefore Wigner positive, all states with $m\ge1$ are Wigner negative. Consistent with this fact, we find that already the lowest-order Hankel condition detects nonclassicality of the Fock states, with $\det[H_1] < 0$ for all $m \ge 1$. Now we consider a mixed state,
\begin{equation}\label{fock_mixed}
    \rho = \lambda |0\rangle\langle 0| + (1-\lambda) |1\rangle\langle 1| ,
\end{equation}
with $0 \le \lambda \le 1$. The associated Wigner function is
\begin{equation}
    W(x,p)
    =
    \frac{1}{2\pi}
    \Big[(1-\lambda)(x^2+p^2) + 2\lambda - 1\Big]
    e^{-\frac{1}{2}(x^2+p^2)} . \nonumber
\end{equation}
The state is Wigner negative for $0<\lambda<0.5$. We compare the different moment-based criteria by determining the range of $\lambda$ over which negativity is detected; the results are summarized in Table~\ref{tab:fock_mixture}.

\textbf{2. Cat states.} 
Schr\"odinger cat states, defined as the superpositions of two coherent states, $\ket{\mathrm{cat}_{\pm} (\alpha)} = \frac{1}{\mathcal{N}_{\alpha}} \left( \ket{\alpha} \pm \ket{-\alpha} \right)$, with $\mathcal{N}_{\alpha} = \sqrt{2(1 \pm e^{-2\alpha^2})}$, possess the Wigner function~\cite{walschaers_prx_quantum}
\begin{align}
   W_{\text{cat}_{\pm}} (x,p) = \frac{1}{2 \pi {\mathcal{N}_\alpha^2}} &\Bigg( e^{ -\frac{1}{2} \left\{ (x-\alpha)^2 +p^2 \right\}} +e^{ -\frac{1}{2} \left\{ (x+\alpha)^2 +p^2 \right\}} \nonumber\\
   & \pm \cos (\sqrt{2} \alpha x ) e^{ -\frac{1}{2} \left( x^2 +p^2 \right)} \Bigg) \nonumber
\end{align}
whose interference term gives rise to Wigner negativity, a clear signature of nonclassicality. In contrast, the classical mixture of the same coherent states, $\rho_{\mathrm{cl}} = \frac{1}{2} \left( \ket{\alpha}\bra{\alpha} + \ket{-\alpha}\bra{-\alpha} \right)$, although non-Gaussian, possesses the positive Wigner function
\begin{align}
W_{\mathrm{cl}}(x,p) = \frac{1}{4 \pi} \Bigg( e^{ -\frac{1}{2} \left\{ (x-\alpha)^2 +p^2 \right\}} +e^{ -\frac{1}{2} \left\{ (x+\alpha)^2 +p^2 \right\}} \Bigg). \nonumber
\end{align}
We consider the noisy cat state,
\begin{equation}\label{cat_mixed}
\rho_{\mathrm{mix}} = (1-\lambda)\,\ket{\mathrm{cat}_{\pm}}\bra{\mathrm{cat}_{\pm}} + \lambda\, \rho_{\mathrm{cl}},
\end{equation}
with $0\leq\lambda\le1$ and corresponding Wigner function
\begin{equation}
W_{\mathrm{mix}}(x,p) = (1-\lambda)\, W_{\mathrm{cat}}(x,p) + \lambda\, W_{\mathrm{cl}}(x,p). \nonumber
\end{equation}
We compare the moment-based criteria developed here by determining the range of $\lambda$ for which Wigner negativity is detected for $2\leq \alpha\leq 5$ in Fig.~\ref{fig:cat}.

\begin{center}
    \textbf{Appendix II: Performance of different moment hierarchies}
\end{center}

\begin{figure}[t]
    \centering
    \includegraphics[width=0.49\textwidth]{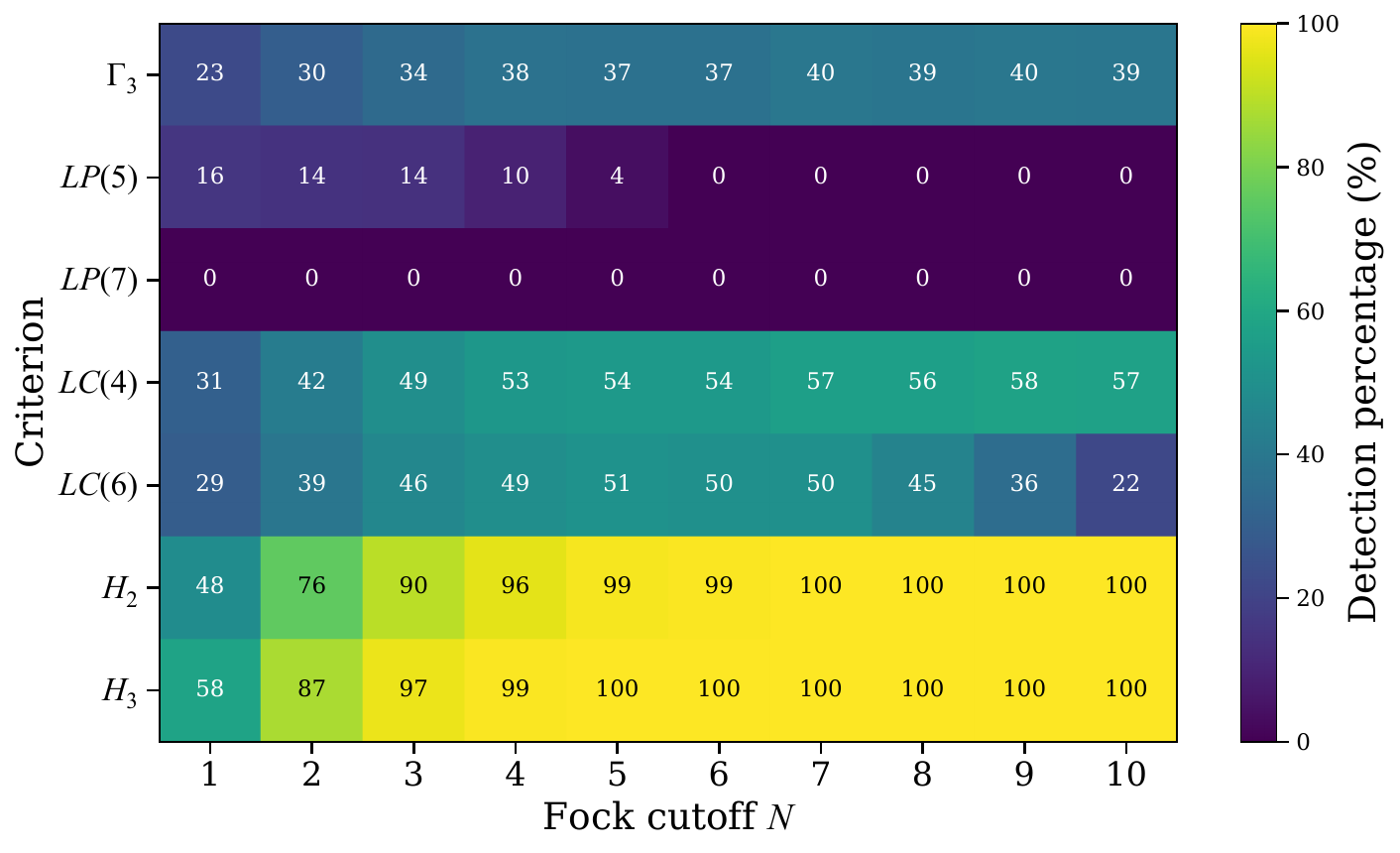}
\caption{
\small
\justifying
Detection probability of the moment-based hierarchies for random states drawn from the Hilbert-Schmidt ensemble in the truncated Fock space \(\mathrm{span}{|0\rangle,\ldots,|N\rangle}\), using \(10^4\) samples for each cutoff \(N\). Each cell shows the percentage of states detected by the corresponding criterion.}
\label{fig:performance}
\end{figure}

Here the main objective is to understand how efficiently the different criteria detect Wigner negativity when the maximum accessible moment order is fixed. To perform this analysis, we generate random quantum states in the truncated Fock space $\mathcal{H}_N = \mathrm{span}\left\{|0\rangle,|1\rangle,\dots,|N\rangle \right\}$, with cutoff values ranging from $N=1$ to $N=10$. The random states are sampled according to the Hilbert-Schmidt ensemble. Each density matrix is generated as
\begin{equation}
    \rho = \frac{GG^\dagger}{\text{Tr}(GG^\dagger)},
\end{equation}
where $G$ is a complex Ginibre matrix with independently distributed Gaussian entries. By construction, these states are positive semidefinite and properly normalized. For each random state, the Wigner function is computed numerically using the standard Fock-basis representation implemented in QuTiP. The Wigner moments $\{ w_n\}$ are then evaluated numerically through direct integration over phase space. A state is classified as Wigner negative whenever the minimum value of the Wigner function attains negative values on the numerical phase-space grid, below certain threshold $( -10^{-15})$. The fraction of randomly generated Wigner-negative states detected by each criterion is summarized in Fig.~\ref{fig:performance}. Several important features emerge from the numerical analysis. 

First, Wigner negativity rapidly becomes generic as the Fock cutoff increases. In particular, for $N\ge4$, essentially all randomly generated states exhibit Wigner negativity. This behavior reflects the fact that positivity of the Wigner function represents a constrained class within the full CV state space.

Second, the numerical results clearly demonstrate the hierarchy of detection strengths (see \hyperref[SM_D]{SM~D}~\cite{SM}),
\begin{equation}
\text{Hankel Positivity}
\Longrightarrow
\text{LC}
\Longrightarrow
\text{LP}.
\end{equation}
For every cutoff value, the Hankel hierarchy consistently detects the largest fraction of Wigner-negative states, followed by the LC hierarchy and finally the LP hierarchy.

Third, increasing the order of the scalar LP conditions does not improve the detection capability. This behavior indicates that isolated higher-order scalar inequalities capture only limited information about the underlying moment structure. A similar behavior is observed for the LC hierarchy. In contrast, the Hankel hierarchy becomes significantly stronger with increasing order. In particular, the higher-order Hankel conditions $H_2$ and $H_3$ detect almost all Wigner-negative states for sufficiently large cutoff values. 
This demonstrates that incorporating the global structure of the moment sequence through positivity of the full moment matrix is significantly more powerful than considering isolated scalar inequalities.

\begin{table}[t]
\begin{tabular}{ccccc}
\hline\hline
No. of copies
& \multicolumn{2}{c}{Mixed Fock state}
& \multicolumn{2}{c}{Mixed cat state} \\
\cline{2-3}\cline{4-5}
\(\mathcal{N}\)
& Shadow
& Wigner tomo.
& Shadow
& Wigner tomo. \\
\hline
$1 \times 10^4$ & 12.27\% & 338.88\% & 13.30\% & 5089.34\% \\
$2 \times 10^4$ &  8.83\% & 155.75\% &  9.01\% & 1275.82\% \\
$3 \times 10^4$ &  6.93\% & 101.34\% &  6.96\% &  710.30\% \\
$4 \times 10^4$ &  5.89\% &  84.19\% &  5.93\% &  524.66\% \\
$5 \times 10^4$ &  5.24\% &  62.25\% &  5.46\% &  390.19\% \\
\hline\hline
\end{tabular}
\caption{ 
\justifying
\small
Mean additive error in the witness \(\Gamma_3\), expressed as a percentage of the exact value \(|\Gamma_3|\). The shadow estimator directly estimates the moment witness using
U-statistics, whereas Wigner tomography first reconstructs the Wigner function on a phase-space grid and then evaluates \(\Gamma_3\) from the reconstructed grid.
\label{tab:copy_error_gamma3}
}
\end{table}

\begin{center}
    \textbf{Appendix III: Advantage of the Wigner moment formalism}
\end{center}

{{\textbf{1. Shadow estimation of Wigner moments.}}} The numerical implementation adapts the framework of Ref.~\cite{huang2020predicting} to phase-space observables. Since any single-mode CV state admits a Fock-basis representation
$\rho
=\sum_{m,n=0}^{\infty}
\rho_{mn}
|m\rangle\!\langle n|$,
and finite-energy states can be approximated efficiently 
by truncation to a sufficiently large Fock subspace, the simulations are performed in a truncated Fock basis containing the first \(d\) Fock states. For each copy of the quantum state, a unitary matrix \(U\in U(d)\) is sampled independently from the Haar measure. The use of Haar-random unitaries guarantees that every measurement basis is drawn uniformly from the unitary group, thereby implementing the standard global-measurement version of the classical-shadow protocol. After applying \(U\), a projective measurement is performed in the Fock basis, producing an outcome \(b\) with probability
\(
p(b|U)
=\langle b|U\rho U^\dagger|b\rangle.
\)
Each measurement outcome is converted into a classical shadow through the inverse measurement channel
\[
\hat{\rho}
=(d+1)\,
U^\dagger |b\rangle\!\langle b|U
-\mathbb{I},
\]
which satisfies
\(
\mathbb{E}[\hat{\rho}]
=\rho.
\)
Consequently, every shadow provides an unbiased estimator of the underlying quantum state. Repeating the procedure over \(N\) independent copies generates a collection of independent shadow snapshots $\{\hat{\rho}_1,\hat{\rho}_2,\ldots,\hat{\rho}_N\}$. To connect the shadow data with phase-space quantities, we precompute the Wigner kernel operators \(\Delta(\alpha)\) on a discretized phase-space grid. For each shadow and each phase-space point, we evaluate
\[
\hat{W_j}(\alpha)
=\mathrm{Tr}\!\left[\hat{\rho}_j\,\Delta(\alpha)\right].
\]
Because the shadow estimator is unbiased, one has
$\mathbb{E}\!\left[\hat W_j(\alpha)\right]
=
W(\alpha)$,
so that the collection of shadow-derived quasiprobabilities provides direct access to phase-space moments. Estimating higher-order moments requires products of several independent copies of the Wigner function. Rather than multiplying shadow estimates obtained from the same measurement record, which would introduce finite-sample bias, we employ unbiased U-statistic estimators. For a moment of order \(k\), the estimator is constructed from all distinct \(k\)-tuples of shadows. The extension to multimode systems is straight-forward. For an \(M\)-mode state, the phase-space kernel becomes
$\Delta(\boldsymbol{\alpha})
=\bigotimes_{j=1}^{M}
\Delta(\alpha_j)$,
and the corresponding moments are defined through integrals over the full \(2M\)-dimensional phase space. Since classical shadows naturally estimate operators acting on tensor-product Hilbert spaces, the same measurement and reconstruction protocol immediately yields unbiased estimators for multimode Wigner moments.

{{\textbf{2. Tomography versus shadow benchmark.}}} We now describe the numerical benchmark used to support the few-copy claim in the main text. For the direct Wigner-tomography baseline, we again use the discretized phase space grid large enough to contain the relevant support of the truncated state. The total copy budget is divided over the grid points. At each point, displaced-parity measurements provide samples of \(W(x,p)\). The moments \(w_2\) and \(w_3\) are then evaluated from the reconstructed grid. This gives tomography the same statistical care as the shadow estimator, while still forcing it to pay the cost of resolving the full phase-space distribution, so the accuracy of moment estimation is limited by the need to reconstruct much more information than is required for detecting negativity.

The benchmark was performed for two representative mixed Wigner-negative states: the mixed Fock state $\rho=0.2|0\rangle\langle0|+0.8|1\rangle\langle1|$,
and the noisy odd cat state
$\rho_{\rm mix}=0.8|\mathrm{cat}_{-}(\alpha)\rangle\langle\mathrm{cat}_{-}(\alpha)|
+0.2\,\rho_{\rm cl}$, with $\alpha=4$, embedded in a sufficiently large truncated Fock space $(d\leq 40)$. For each total copy number, we computed the mean additive error
$\mathbb{E}\left[|\widehat{\Gamma_3}-\Gamma_3|\right]$
over $500$ independent Monte-Carlo trials. The numerical results show that the shadow estimator reaches an error below \(0.14|\Gamma_3|\) already at \(\mathcal{N}=10^4\) copies for both states, where the tomographic estimation fails badly. 
The error percentages are summarised in Table~\ref{tab:copy_error_gamma3}. For the same physical copy budget, direct moment estimation gives a reliable negativity witness long before full Wigner tomography becomes considerable.

\clearpage
\onecolumngrid

\renewcommand{\theequation}{S\arabic{equation}}
\setcounter{equation}{0}
\noindent \begin{center}{\Large \bf Supplemental Material}\end{center}
~\vspace{-0.5cm}

\tableofcontents
\section{A. Quantum information in phase space: Wigner function}\label{SM_A}

Here, we briefly review the phase-space formulation of quantum mechanics based on Wigner function, which provides a powerful description of continuous-variable (CV) quantum systems with bosonic degrees of freedom, such as modes of the electromagnetic field. An $M$-mode system is described by annihilation operators $\vec{a} \coloneqq (a_1,\dots,a_M)$ satisfying the canonical commutation relations
\begin{equation}
    [a_m,a_{m'}]=[a_m^\dagger,a_{m'}^\dagger]=0, 
    \qquad 
    [a_m,a_{m'}^\dagger]=\delta_{m,m'}\mathbb{I}.
\end{equation}

The associated Hilbert space is $\mathcal{H}=\bigotimes_{m=1}^M \mathcal{H}_m$, where each $\mathcal{H}_m$ is an infinite-dimensional space spanned by the Fock basis $\{\ket{n}\}_{n\in\mathbb{N}}$, and physical states are represented by density operators $\rho \in \mathcal{D}(\mathcal{H})$. A convenient representation of CV systems is obtained by introducing quadrature operators
\begin{equation}
    \hat{x}_m = \frac{a_m + a_m^\dagger}{\sqrt{2}}, 
    \qquad 
    \hat{p}_m = \frac{a_m - a_m^\dagger}{\sqrt{2}i},
\end{equation}
which satisfy $[\hat{x}_m,\hat{p}_{m'}]=i\delta_{m,m'}\mathbb{I}$ (we set $\hbar=1$). These operators define a $2M$-dimensional phase space with canonical coordinates $(\vec{x},\vec{p})$. Equivalently, one may use the complex variables $\alpha_m = (x_m + i p_m)/\sqrt{2}$, which provide a compact description of phase space commonly used in quantum optics. The phase-space formulation of quantum mechanics represents quantum states through quasiprobability distributions. Among these, the Wigner function plays a central role. 
For a state $\rho$, it can be defined in the $(\vec{x},\vec{p})$ representation as
\begin{equation}
W_{\rho}(\vec{x},\vec{p}) = \frac{1}{(2\pi)^M} \int d^M y \;
\bra{\vec{x}+\frac{\vec{y}}{2}} \rho \ket{\vec{x}-\frac{\vec{y}}{2}} 
\, e^{-i\vec{p}\cdot\vec{y}},
\end{equation}
or, equivalently, in the complex phase-space picture via the displaced parity operator,
\begin{equation}
W_\rho(\vec{\alpha}) = \mathrm{Tr}\!\left[\rho \, \Delta (\vec{\alpha})\right],
\end{equation}
where $\Delta(\vec{\alpha}) =  \left(\frac{1}{\pi}\right)^MD(\vec{\alpha}) \Pi D^\dagger(\vec{\alpha})$, with $\Pi = \prod_m e^{-i\pi a_m^\dagger a_m}$ is the parity operator and $D(\vec{\alpha}) = \prod_m e^{\alpha_m a_m^\dagger - \alpha_m^* a_m}$ is the displacement operator. These two representations are completely equivalent and provide complementary perspectives: the $(x,p)$ formulation emphasizes the analogy with classical phase space, while the $\alpha$-representation highlights the role of displacement operations and is closely connected to experimental implementations. The Wigner function is real and normalized,
\begin{equation}
\int dx^M \, dp^M \; W(\vec{x},\vec{p}) = 1.
\end{equation}
Further, the Wigner function of a single mode system corresponding to a density matrix $\hat{\rho}$ can be expressed as
\begin{equation}
  W(x,p)=   \frac{1}{2 \pi} \int\limits_{ -\infty}^{\infty} \bra{x+\frac{y}{2}}\hat{\rho} \ket{x-\frac{y}{2}}e^{-ipy} \hspace{0.1cm} dy.
\end{equation}
For a pure state $\ket{\psi}$ with wave function $\psi(x)$, this Wigner function simplifies to 
\begin{equation}
W(x,p)=\frac{1}{2\pi} \int\limits_{-\infty}^{\infty}  \psi(x+\frac{y}{2})\psi^{*}(x-\frac{y}{2}) e^{-ipy} \hspace{0.1cm}dy,
\end{equation}
and its marginals reproduce measurable probability distributions,
\begin{align}
\int W(x,p)\,dp = |\psi(x)|^2, \quad
\int W(x,p)\,dx = |\phi(p)|^2,
\end{align}
where $\psi(x)$ and $\phi(p)$ denote the position and momentum wavefunctions, respectively. Despite the classical-like features, $W(\vec{x},\vec{p})$ can take negative values, reflecting the nonclassical nature of quantum states, such negativities cannot be reproduced by any classical joint probability distribution and are therefore regarded as a signature of nonclassicality.

Another key aspect of the phase-space approach is the correspondence between operators and phase-space functions via the Weyl transform. For an operator $\hat{A}$, one defines
\begin{equation}
\tilde{A}(x,p) = \int dy \; e^{-ipy} 
\bra{x+\frac{y}{2}} \hat{A} \ket{x-\frac{y}{2}},
\end{equation}
such that expectation values take the classical-like form
\begin{equation}
\langle \hat{A} \rangle = \int dx\,dp \; W(x,p)\,\tilde{A}(x,p).
\end{equation}
This mapping provides a direct bridge between operator algebra in Hilbert space and functions on phase space.

Within this framework, a fundamental distinction arises between Gaussian and non-Gaussian states. Gaussian states are characterized by Gaussian Wigner functions and therefore possess non-negative phase-space representations. Hudson's theorem states that a pure state has a non-negative Wigner function if and only if it is Gaussian \cite{hudson1974wigner,soto1983wigner}. This equivalence, however, does not extend to mixed states, for which non-Gaussian states may still exhibit positive Wigner functions \cite{mandilara2009extending,filip2011detecting}. Consequently, Wigner-negative states constitute a distinguished subclass of non-Gaussian states, and Wigner negativity is widely regarded as a key indicator of nonclassicality and a valuable resource for continuous-variable quantum information processing.

\section{B. The classical moment problem and its applications in quantum information theory}\label{SM_B}

The classical moment problem concerns the characterization of sequences that arise as moments of a positive measure. Given a sequence of real numbers $\{m_k\}_{k=0}^\infty$, the problem is to determine whether there exists a positive measure $\mu$ such that
\begin{equation}
    m_k = \int x^k \, d\mu(x), \quad \forall k \in \mathbb{N}.
\end{equation}
Sequences admitting such a representation are referred to as moment sequences. A central result in this context is that positivity of the underlying measure imposes strong consistency conditions on the sequence $\{m_k\}$. In particular, the associated Hankel matrices constructed from the moments must be positive semidefinite at all orders. More explicitly, defining the Hankel matrix $H_n(\mathbf{m})$ with elements
\begin{equation}
    [H_n(\mathbf{m})]_{ij} = m_{i+j}, \quad i,j \in \{0,1,\dots,n\},
\end{equation}
a necessary and sufficient condition for $\{m_k\}$ to correspond to a positive measure $\mu$ is
\begin{equation}
    H_n(\mathbf{m}) \geq 0 \quad \forall n.
\end{equation}
Equivalently, this condition can be understood as the requirement that all quadratic forms generated by polynomials are nonnegative,
\begin{equation}
    \int \left( \sum_{i=0}^k c_i x^i \right)^2 d\mu(x) \ge 0,
\end{equation}
for any real coefficients $\{c_i\}$. These constraints form a hierarchy of increasingly stringent conditions that any valid moment sequence must satisfy.

The framework of moment problems has found significant applications in quantum information theory, particularly in the characterization of quantum correlations and nonclassical features. One prominent example is the use of moments of the partially transposed density matrix, referred to as PT moments, for the detection of quantum entanglement \cite{elben2020mixed,yu2021optimal,Neven2021,miller2026detecting,mallick2025higher,mukherjee2025detecting, calabrese2012entanglement, tarabunga2025quantifyingmixedstateentanglementpartial}. 
Given a bipartite quantum state $\rho_{AB}$, the PT moments are defined as
\begin{equation}
    p_n = \text{Tr}\left[(\rho_{AB}^{T_B})^n\right], \label{PTmoments}
\end{equation}
for integers $n \geq 1$. These quantities provide indirect access to the spectrum of the partially transposed state $\rho^{T_B}_{AB}$, whose eigenvalues $\{\lambda_i\}$ determine entanglement properties. In particular, the characteristic polynomial
\begin{equation}
    \text{Det}(\rho^{T_B}_{AB} - \lambda I) = \sum_k a_k \lambda^k,
\end{equation}
has coefficients $a_k$ that are functions of the moments $\{p_n\}$. Since the transposition map is not physically implementable, these moments offer an experimentally accessible route to probe the spectrum.

A key advantage of moment-based approaches is their experimental accessibility. Techniques such as shadow tomography~\cite{aaronson2018shadow,huang2020predicting,cieslinski2024analysing} allow efficient estimation of moments without requiring full state reconstruction. 
These methods are particularly well-suited for higher-dimensional systems \cite{PhysRevA.109.012404}.
Moment-based methods have been further generalized to probe other forms of nonclassicality, including Margenau--Hill~\cite{chakrabarty2026quantifyingmargenauhillnonclassicality} and Kirkwood--Dirac nonpositivity~\cite{chakrabarty2026probing}, non-Markovianity \cite{mallick2024assessing} and quantum imaginarity \cite{chakrabarty2026detection}.

\section{C. Operator Representation and Accessibility of Wigner Moments}\label{SM_C}

In this section, we derive an exact operator representation of the Wigner moments. We show that the $n$-th Wigner moment can be expressed as the expectation value of a universal multi-copy observable acting on $n$ identical copies of the quantum state. This representation naturally reveals an experimentally feasible measurement scheme based on multi-mode interferometry and parity detection.

\subsection*{Multi-copy representation of Wigner moments}
The Wigner function of a single-mode quantum state $\rho$ can be expressed as
\begin{equation}
W(\alpha)
=\mathrm{Tr}\!\left[\rho\,\Delta(\alpha)\right],
\label{eq:wigner_operator_representation}
\end{equation}
where
\begin{equation}
\Delta(\alpha)
=\frac{1}{\pi}
D(\alpha)\Pi D^{\dagger}(\alpha).
\label{eq:phase_point_operator}
\end{equation}
denotes the displaced parity operator. Here,
\begin{equation}
D(\alpha)
=\exp\!\left(\alpha a^{\dagger}-\alpha^{*}a\right)
\end{equation}
is the displacement operator and
\begin{equation}
\Pi
=(-1)^{a^{\dagger}a}
\end{equation}
is the photon-number parity operator. Throughout this work, we adopt the phase-space convention
$\alpha=({q+ip})/{\sqrt2}$, such that $dq\,dp \equiv 2 d^2\alpha $. We freely switch between the notations
$W(\alpha)$ and $W(q,p)$ using this convention. In this convention, the vacuum state has the Wigner function
\begin{equation}
W_0(q,p)
=
\frac1\pi e^{-q^2-p^2},
\end{equation}
and the $n$-th Wigner moment is defined as
\begin{equation}
w_n
=2\int d^{2}\alpha\,W(\alpha)^n .
\label{eq:wn_definition}
\end{equation}
Substituting Eq.~(\ref{eq:wigner_operator_representation}) into Eq.~(\ref{eq:wn_definition}) yields
\begin{align}
w_n
&=
2\int d^{2}\alpha
\left[
\mathrm{Tr}\!\left(
\rho\Delta(\alpha)
\right)
\right]^n .
\label{eq:wn_step1}
\end{align}
Using the identity
\begin{equation}
\left[
\mathrm{Tr}(XY)
\right]^n
=\mathrm{Tr}
\!\left[
X^{\otimes n}
Y^{\otimes n}
\right],
\end{equation}
valid for arbitrary operators $X$ and $Y$, Eq.~(\ref{eq:wn_step1}) can be rewritten as
\begin{align}
w_n
&=
2\int d^{2}\alpha\,
\mathrm{Tr}
\!\left[
\rho^{\otimes n}
\Delta(\alpha)^{\otimes n}
\right]
\nonumber\\
&=
\mathrm{Tr}
\!\left[
\rho^{\otimes n}
\int 2d^{2}\alpha\,
\Delta(\alpha)^{\otimes n}
\right].
\end{align}
We therefore introduce the operator
\begin{equation}
A_n
=2\int d^{2}\alpha\,
\Delta(\alpha)^{\otimes n},
\label{eq:An_definition}
\end{equation}
which allows the $n$-th Wigner moment to be expressed as
\begin{equation}
w_n
=\mathrm{Tr}
\!\left[
\rho^{\otimes n}A_n
\right].
\label{eq:wn_operator_form}
\end{equation}

Equation~(\ref{eq:wn_operator_form}) provides a multi-copy representation of the Wigner moments. Rather than viewing $w_n$ as a nonlinear functional of the Wigner function, one may equivalently regard it as the expectation value of a state-independent observable $A_n$ acting on $n$ identical copies of the quantum state. Consequently, the problem of measuring Wigner moments reduces to determining the structure of the operator $A_n$.

\subsection*{Position-space kernel of the moment operator} 
To determine the operator $A_n$, we evaluate its kernel in the position representation. We begin with the position-space kernel of the displaced parity operator. In the convention adopted throughout this work, it is given by
\begin{equation}
\langle x|\Delta(q,p)|y\rangle
=
\frac{1}{\pi}
e^{ip(x-y)}
\delta(x+y-2q).
\label{eq:delta_kernel}
\end{equation}

Using Eq.~(\ref{eq:delta_kernel}), the kernel of $A_n$ can be written as
\begin{align}
\langle x_1,\ldots,x_n|
A_n
|y_1,\ldots,y_n\rangle
&=
\int dq\,dp
\prod_{j=1}^{n}
\langle x_j|
\Delta(q,p)
|y_j\rangle
\nonumber\\
&=
\frac1{\pi^n}
\int dq\,dp\,
e^{ip\sum_j(x_j-y_j)}
\prod_j\delta(x_j+y_j-2q).
\label{eq:kernel_step1}
\end{align}

The integration over the momentum variable can be performed immediately. Using
\begin{equation}
\int dp\,e^{ipS}
=
2\pi\,\delta(S),
\label{eq:p_integral}
\end{equation}
where
\begin{equation}
S
=
\sum_{j=1}^{n}(x_j-y_j),
\end{equation}
we obtain
\begin{align}
&
\langle x_1,\ldots,x_n|
A_n
|y_1,\ldots,y_n\rangle
=
\frac{2}{\pi^{n-1}}
\delta\!\left(
\sum_{j=1}^{n}(x_j-y_j)
\right)
\int dq
\prod_{j=1}^{n}
\delta(x_j+y_j-2q).
\label{eq:kernel_step2}
\end{align}

Using one of the delta functions to perform the $q$-integration yields
\begin{equation}
\int dq
\prod_{j=1}^{n}
\delta(x_j+y_j-2q)
=
\frac12
\prod_{j=2}^{n}
\delta\!\Big[
(x_j+y_j)-(x_1+y_1)
\Big].
\end{equation}

The remaining delta functions constrain all quantities $x_j+y_j$ to be equal. Explicitly,
\begin{equation}
x_1+y_1
= x_2+y_2=
\cdots
=x_n+y_n.
\label{eq:equal_sum_constraint}
\end{equation}
Let the common value of these expressions be denoted by $s$. Then
\begin{equation}
x_j=s-y_j,
\qquad
j=1,\ldots,n.
\label{eq:xj_relation1}
\end{equation}

Substituting Eq.~(\ref{eq:xj_relation1}) into the remaining constraint appearing in Eq.~(\ref{eq:kernel_step2}) gives
\begin{align}
0
&=
\sum_{j=1}^{n}(x_j-y_j)
\nonumber\\
&=
\sum_{j=1}^{n}(s-2y_j)
\nonumber\\
&=
ns-2\sum_{j=1}^{n}y_j.
\end{align}
Hence,
\begin{equation}
s
=
\frac{2}{n}
\sum_{j=1}^{n}y_j.
\label{eq:s_value}
\end{equation}

Substituting Eq.~(\ref{eq:s_value}) back into Eq.~(\ref{eq:xj_relation1}), we obtain
\begin{equation}
x_j
=
\frac{2}{n}
\sum_{k=1}^{n}y_k
-y_j,
\qquad
j=1,\ldots,n.
\label{eq:xj_final}
\end{equation}

It is convenient to collect these relations into a matrix equation. Defining
\begin{equation}
\mathbf{x}
=
\begin{pmatrix}
x_1\\
x_2\\
\vdots\\
x_n
\end{pmatrix},
\qquad
\mathbf{y}
=
\begin{pmatrix}
y_1\\
y_2\\
\vdots\\
y_n
\end{pmatrix},
\end{equation}
together with the matrix
\begin{equation}
M_n
=
\frac{2}{n}J-I,
\label{eq:Mn_definition}
\end{equation}
where $J$ denotes the $n\times n$ matrix whose entries are all equal to unity, Eq.~(\ref{eq:xj_final}) assumes the compact form
\begin{equation}
\mathbf{x}
=
M_n\mathbf{y}.
\label{eq:x_equals_My}
\end{equation}

Equation~(\ref{eq:x_equals_My}) reveals the geometric structure of the operator $A_n$. The kernel of $A_n$ is supported only on configurations related by the linear transformation $M_n$. In the following subsection, we show that this transformation is a reflection in the $(n-1)$-dimensional relative-coordinate subspace and derive the exact kernel of $A_n$, including the normalization factor arising from the associated Jacobian.

\subsection*{Reduction to a reflection operator} 
Equation~(\ref{eq:x_equals_My}) shows that the support of the kernel of $A_n$ is restricted to points satisfying the linear relation
\begin{equation}
\mathbf{x}
=M_n\mathbf{y},
\qquad
M_n
=
\frac{2}{n}J-I.
\end{equation}
To obtain the exact form of the kernel, it remains to determine the Jacobian associated with the transformation from the original set of delta-function constraints to the compact matrix constraint above.

Using the result of the previous subsection, the kernel of $A_n$ can be written as
\begin{equation}
\langle \mathbf{x}|A_n|\mathbf{y}\rangle
=
\frac1{\pi^{\,n-1}}
\delta\!\left(
\sum_j(x_j-y_j)
\right)
\int dq
\prod_j
\delta(x_j+y_j-2q).
\label{eq:kernel_before_jacobian}
\end{equation}
The constraints appearing in Eq.~(\ref{eq:kernel_before_jacobian}) can be expressed in terms of the functions
\begin{align}
g_1(\mathbf{x},\mathbf{y})
&=
\sum_{j=1}^{n}(x_j-y_j),
\label{eq:g1}
\\
g_i(\mathbf{x},\mathbf{y})
&=
x_i+y_i-x_1-y_1,
\qquad
i=2,\ldots,n.
\label{eq:gi}
\end{align}
The kernel therefore contains the product of delta functions
\begin{equation}
\delta(g_1)
\prod_{i=2}^{n}
\delta(g_i).
\label{eq:delta_constraints}
\end{equation}

As shown in the previous subsection, the simultaneous solution of these constraints is precisely
\begin{equation}
\mathbf{x}
=
M_n\mathbf{y}.
\label{eq:constraint_solution}
\end{equation}
The transformation from the variables \( (g_1,\ldots,g_n) \) to \( (x_1,\ldots,x_n) \) introduces a Jacobian factor. To evaluate it, we compute the matrix
\begin{equation}
J_g
=\left(
\frac{\partial g_i}{\partial x_j}
\right).
\end{equation}
Using Eqs.~(\ref{eq:g1}) and (\ref{eq:gi}), one finds
\begin{equation}
J_g
=\begin{pmatrix}
1 & 1 & 1 & \cdots & 1
\\
-1 & 1 & 0 & \cdots & 0
\\
-1 & 0 & 1 & \cdots & 0
\\
\vdots & \vdots & \vdots & \ddots & \vdots
\\
-1 & 0 & 0 & \cdots & 1
\end{pmatrix}.
\label{eq:jacobian_matrix}
\end{equation}

The determinant of this matrix can be evaluated conveniently using the Schur complement. Writing Eq.~(\ref{eq:jacobian_matrix}) in block form,
\begin{equation}
J_g
=\begin{pmatrix}
1 & \mathbf{1}^{T}
\\
-\mathbf{1} & I_{n-1}
\end{pmatrix},
\end{equation}
where $\mathbf{1}$ denotes the $(n-1)$-dimensional column vector with all entries equal to unity and $I_{n-1}$ is the $(n-1)\times(n-1)$ identity matrix, we obtain
\begin{align}
\det J_g
&=
\det(I_{n-1})
\,
\det\!\left(
1-\mathbf{1}^{T}I_{n-1}^{-1}(-\mathbf{1})
\right)
\nonumber\\
&=
1+\mathbf{1}^{T}\mathbf{1}
\nonumber\\
&=
1+(n-1)
\nonumber\\
&=
n.
\label{eq:jacobian_determinant}
\end{align}

The multidimensional delta-function identity therefore gives
\begin{equation}
\delta(g_1)
\prod_{i=2}^{n}
\delta(g_i)
=\frac{1}{|\det J_g|} \,
\delta(\mathbf{x}-M_n\mathbf{y}),
\end{equation}
which, together with Eq.~(\ref{eq:jacobian_determinant}), yields
\begin{equation}
\delta(g_1)
\prod_{i=2}^{n}
\delta(g_i)
=\frac{1}{n}
\delta(\mathbf{x}-M_n\mathbf{y}).
\label{eq:delta_reduction}
\end{equation}

Substituting Eq.~(\ref{eq:delta_reduction}) into Eq.~(\ref{eq:kernel_before_jacobian}), we finally obtain
\begin{equation}
\langle \mathbf{x}|A_n|\mathbf{y}\rangle
=
\frac1{n\pi^{\,n-1}}
\delta(\mathbf{x}-M_n\mathbf{y}).
\label{eq:kernel_final_result}
\end{equation}

Equation~(\ref{eq:kernel_final_result}) reveals that the action of $A_n$ is entirely determined by the matrix $M_n$. The operator maps a configuration $\mathbf{y}$ to the transformed configuration $M_n\mathbf{y}$ and is therefore generated by a linear orthogonal transformation in the $n$-dimensional configuration space. The structure of this transformation becomes particularly transparent after diagonalizing $M_n$. As we show in the next subsection, $M_n$ acts trivially on a single collective coordinate while reversing the sign of all relative coordinates. Consequently, $A_n$ can be interpreted as a reflection operator in the $(n-1)$-dimensional relative-coordinate subspace.

\subsection*{Exact operator form of $A_n$: Proof of Theorem~1} 
The structure of the kernel obtained in Eq.~(\ref{eq:kernel_final_result}) is entirely determined by the matrix
\begin{equation}
M_n
=\frac{2}{n}J-I.
\label{eq:Mn_recalled}
\end{equation}
To understand the action of the operator $A_n$, it is therefore necessary to analyze the spectral properties of $M_n$.

We begin by recalling that the matrix $J$ has rank one and can be written as
\begin{equation}
J
=
\mathbf{u}\mathbf{u}^{T},
\end{equation}
where
\begin{equation}
\mathbf{u}
=(1,1,\ldots,1)^{T}.
\end{equation}
Consequently,
\begin{equation}
J\mathbf{u}
=n\mathbf{u},
\end{equation}
while every vector orthogonal to $\mathbf{u}$ belongs to the kernel of $J$. Thus, the spectrum of $J$ consists of a single nonzero eigenvalue $n$ and $n-1$ vanishing eigenvalues,
\begin{equation}
\mathrm{spec}(J)
=\{n,0,\ldots,0\}.
\end{equation}

Since $M_n$ is related to $J$ through Eq.~(\ref{eq:Mn_recalled}), the corresponding eigenvalues are obtained immediately,
\begin{equation}
\mathrm{spec}(M_n)
=\left\{
\frac{2}{n}n-1,
\frac{2}{n}\times 0-1,
\ldots,
\frac{2}{n}\times 0-1
\right\},
\end{equation}
which simplifies to
\begin{equation}
\mathrm{spec}(M_n)
=\{1,-1,\ldots,-1\}.
\label{eq:spectrum_Mn}
\end{equation}

Equation~(\ref{eq:spectrum_Mn}) shows that $M_n$ possesses a unique eigendirection with eigenvalue $+1$, while every direction orthogonal to it acquires eigenvalue $-1$.

The distinguished eigenvector is
\begin{equation}
\mathbf{e}_0
=\frac{1}{\sqrt n}
(1,1,\ldots,1)^T,
\end{equation}
which naturally defines the collective coordinate
\begin{equation}
Q_0
=\mathbf{e}_0^{T}\mathbf{x}
=\frac{1}{\sqrt n}
\sum_{j=1}^{n}x_j.
\label{eq:collective_coordinate}
\end{equation}
The remaining $n-1$ coordinates are chosen as an arbitrary orthonormal basis spanning the subspace orthogonal to $\mathbf{e}_0$. Denoting these coordinates by
\begin{equation}
Q_1,Q_2,\ldots,Q_{n-1},
\end{equation}
the transformation from the original coordinates to the new coordinates may be written as
\begin{equation}
\mathbf{Q}
=F_n\mathbf{x},
\end{equation}
where $F_n$ is an orthogonal matrix whose first row is given by $\mathbf{e}_0^{T}$.

In this basis, the matrix $M_n$ becomes diagonal,
\begin{equation}
F_nM_nF_n^{T}
=\mathrm{diag}(1,-1,\ldots,-1).
\label{eq:Mn_diagonal}
\end{equation}

Equation~(\ref{eq:Mn_diagonal}) immediately reveals the geometric meaning of $M_n$. The collective coordinate remains unchanged,
\begin{equation}
Q_0
\longmapsto
Q_0,
\end{equation}
whereas every relative coordinate changes sign,
\begin{equation}
Q_j
\longmapsto
-Q_j,
\qquad
j=1,\ldots,n-1.
\end{equation}
Thus, $M_n$ acts as the identity on the one-dimensional collective subspace and as a reflection on the $(n-1)$-dimensional relative-coordinate subspace.

Using Eq.~(\ref{eq:Mn_diagonal}), the kernel in Eq.~(\ref{eq:kernel_final_result}) can be expressed in the transformed coordinates.
Since $F_n$ is orthogonal, the multidimensional Dirac delta distribution remains unchanged under the coordinate transformation. Consequently,
\begin{align}
&
\langle
Q_0,\ldots,Q_{n-1}
|
A_n
|
Q_0',\ldots,Q_{n-1}'
\rangle
\nonumber\\
&=
\frac{1}{n\pi^{n-1}}
\delta(Q_0-Q_0')
\prod_{j=1}^{n-1}
\delta(Q_j+Q_j').
\label{eq:kernel_Q}
\end{align}

The first factor in Eq.~(\ref{eq:kernel_Q}) is precisely the position-space kernel of the identity operator,
\begin{equation}
\langle Q|I|Q'\rangle
=\delta(Q-Q').
\end{equation}
The remaining factors can be identified with the position-space kernel of the parity operator,
\begin{equation}
\langle Q|\Pi|Q'\rangle
=\delta(Q+Q').
\label{eq:parity_kernel}
\end{equation}
Therefore,
\begin{equation}
\delta(Q_0-Q_0')
\prod_{j=1}^{n-1}
\delta(Q_j+Q_j')
=\left\langle
\mathbf{Q}
\left|
I\otimes\Pi^{\otimes(n-1)}
\right|
\mathbf{Q}'
\right\rangle .
\end{equation}

Substituting this identity into Eq.~(\ref{eq:kernel_Q}) yields the operator representation
\begin{equation}
A_n
=\frac{1}{n\pi^{\,n-1}}
F_n^{\dagger}
\left(
I\otimes
\Pi^{\otimes(n-1)}
\right)
F_n,
\label{eq:An_final}
\end{equation}
which proves Theorem~1.

Since $I\otimes\Pi^{\otimes(n-1)}$ acts identically on all relative modes, the resulting operator $A_n$ is independent of the specific orthonormal basis chosen for the relative-coordinate subspace. The theorem provides an exact operator representation of all Wigner moments. Remarkably, the structure of $A_n$ is universal: irrespective of the value of $n$, the operator consists of a parity measurement acting on the relative-coordinate subspace, while the collective mode contributes only an identity operator.

\subsection*{Experimental feasibility} 
The operator representation derived in Theorem~1 provides a direct experimental protocol for measuring arbitrary Wigner moments. In contrast to conventional approaches based on quantum state tomography, the present method allows the moments to be estimated directly from expectation values of a multi-copy observable. Consider the $n$-th Wigner moment,
\begin{equation}
w_n
=\frac{1}{n\pi^{\,n-1}}
\mathrm{Tr}
\!\left[
\rho^{\otimes n}
F_n^{\dagger}
\left(
I\otimes
\Pi^{\otimes(n-1)}
\right)
F_n
\right] = \frac{1}{n\pi^{\,n-1}}
\left\langle
I\otimes
\Pi^{\otimes(n-1)}
\right\rangle_{F_n\rho^{\otimes n}F_n^{\dagger}}.
\label{eq:wn_measurement}
\end{equation}
Equation~(\ref{eq:wn_measurement}) immediately suggests a three-step measurement protocol. 
\begin{enumerate}
    \item The first step consists of preparing $n$ identical copies of the quantum state. The observable associated with the $n$-th Wigner moment acts on the tensor-product Hilbert space
$\mathcal{H}^{\otimes n}$, and therefore requires simultaneous access to $n$ copies of the state $\rho$.

\item The second step is the implementation of the orthogonal interferometric transformation $F_n$. As shown in the previous subsection, the role of $F_n$ is to separate the collective degree of freedom from the relative degrees of freedom. In the transformed basis, the first mode corresponds to the collective coordinate
$Q_0
=\frac{1}{\sqrt n}
\sum_{j=1}^{n}x_j$, 
while the remaining modes span the relative-coordinate subspace. Physically, the transformation $F_n$ can be realized using a passive Gaussian interferometric network implementing the corresponding orthogonal transformation of the quadratures. Such networks may be decomposed into beam splitters and phase shifters according to standard linear-optical constructions.

\item The final step is the measurement of parity on the relative modes. Since the operator acting in the transformed basis is
$I\otimes
\Pi^{\otimes(n-1)}$,
the collective mode contributes only an identity operator and therefore does not affect the measurement outcome. All relevant information is contained in the relative modes. The measurement protocol therefore requires estimation of the expectation value of the joint parity observable
\begin{equation}
\Pi^{\otimes(n-1)}
=\Pi_1\Pi_2\cdots\Pi_{n-1},
\end{equation}
acting on the relative-coordinate subspace. Combining these three steps yields an estimator for the Wigner moment,
\begin{equation*}
w_n
=\frac{1}{n\pi^{\,n-1}}
\left\langle
I\otimes
\Pi^{\otimes(n-1)}
\right\rangle_{F_n\rho^{\otimes n}F_n^{\dagger}}.
\end{equation*}
\end{enumerate}

Consequently, the determination of the Wigner moments requires neither reconstruction of the Wigner function nor full quantum state tomography. Instead, each moment is obtained directly from a single expectation value measured after an interferometric processing of $n$ identical copies of the state.
The physical interpretation of this result is particularly transparent. The transformation $F_n$ isolates a single collective mode that remains invariant under the reflection generated by the operator $A_n$. All nontrivial information about the Wigner moment is encoded in the relative-coordinate subspace, where the action of $A_n$ reduces to a simultaneous parity measurement. The $n$-th Wigner moment therefore quantifies the response of the $n$-copy state to a reflection of all relative degrees of freedom while leaving the collective degree of freedom unchanged.

Theorem~1 thus establishes that the complete hierarchy of Wigner moments can, in principle, be accessed through multicopy interferometry followed by parity measurements on the relative modes. Since the protocol directly estimates the desired moment without reconstructing the underlying phase-space distribution, it provides an experimentally motivated alternative to full state tomography for characterizing phase-space properties of continuous-variable quantum states.

The experimental protocol described above requires the implementation of an orthogonal transformation $F_n$ that separates the collective degree of freedom from the relative degrees of freedom of the \(n\)-copy state. In general, many such transformations are possible. The only requirement is that the first output mode coincide with the normalized collective coordinate $Q_0$.
while the remaining output modes form an orthonormal basis of the (n-1)-dimensional relative-coordinate subspace. For low-order moments, particularly simple choices of $F_n$ may be constructed analytically. These examples illustrate how the collective and relative coordinates arise from the original position coordinates of the n-copies.

For $n=2$, we choose
\begin{equation}
F_2
\equiv\frac{1}{\sqrt2}
\begin{pmatrix}
1 & 1\\
1 & -1
\end{pmatrix}.
\end{equation}
The transformed coordinates are
\begin{equation}
Q_0
=\frac{x_1+x_2}{\sqrt2},
\qquad
Q_1
=\frac{x_1-x_2}{\sqrt2}.
\end{equation}
The coordinate $Q_0$ describes the collective motion of the two copies, while $Q_1$ measures their relative displacement. This transformation is precisely the one implemented by a balanced $50:50$ beam splitter.

For $n=3$, a convenient choice is
\begin{equation}
F_3
\equiv
\begin{pmatrix}
\dfrac1{\sqrt3} &
\dfrac1{\sqrt3} &
\dfrac1{\sqrt3}
\\[10pt]
\dfrac1{\sqrt2} &
-\dfrac1{\sqrt2} &
0
\\[10pt]
\dfrac1{\sqrt6} &
\dfrac1{\sqrt6} &
-\sqrt{\dfrac23}
\end{pmatrix}
\end{equation}
The corresponding coordinates are
\begin{align}
Q_0
=
\frac{x_1+x_2+x_3}{\sqrt3},
\quad
Q_1
=
\frac{x_1-x_2}{\sqrt2},
\quad
Q_2
=
\frac{x_1+x_2-2x_3}{\sqrt6}.
\end{align}

For $n=4$, we may choose
\begin{equation}
F_4
\equiv\
\begin{pmatrix}
\dfrac12 &
\dfrac12 &
\dfrac12 &
\dfrac12 \\[4mm]

\dfrac1{\sqrt2} &
-\dfrac1{\sqrt2} &
0 &
0 \\[4mm]

\dfrac12 &
\dfrac12 &
-\dfrac12 &
-\dfrac12 \\[4mm]

0 &
0 &
\dfrac1{\sqrt2} &
-\dfrac1{\sqrt2}
\end{pmatrix}
\end{equation}
The transformed coordinates become
\begin{align}
Q_0
=
\frac{x_1+x_2+x_3+x_4}{2},
\quad
Q_1
=
\frac{x_1-x_2}{\sqrt2},
\quad
Q_2
=
\frac{x_1+x_2-x_3-x_4}{2},
\quad
Q_3
=
\frac{x_3-x_4}{\sqrt2}.
\end{align}
In each case, the rows of $F_n$ are orthonormal and satisfy
\begin{equation}
F_nF_n^{T}
=I.
\end{equation}
Furthermore, the transformations diagonalize the reflection matrix ($M_n$),
\begin{equation}
F_nM_nF_n^{T}
=\mathrm{diag}(1,-1,\ldots,-1).
\end{equation}
Consequently, the first output mode is left invariant by the action of ($M_n$), whereas every relative mode acquires a sign inversion. This is precisely the structure responsible for the appearance of the operator
\begin{equation}
I\otimes\Pi^{\otimes(n-1)}
\end{equation}
in Theorem~1. The transformations $F_2$, $F_3$, and $F_4$ provide explicit examples
of collective-coordinate transformations entering the
moment-measurement protocol.

\section{D. Detection of Wigner negativity}\label{SM_D}
Here we give the detailed proofs of the three complementary hierarchy of detection criteria discussed in the main text.
\subsection{Proof of Theorem 2}
Here we aim to prove that for any positive normalized Wigner function $W$ and every integer $n\geq 2$, $(w_n)^{n-2} - (w_{n-1})^{n-1} \ge 0 $. Since $W \ge 0$ and $\int W \, dx^N dp^N= 1$, we define a probability measure
\begin{equation}
d\mu = W \, dx^N dp^N, \;\; \text{such that}\;\; \int d\mu =1.
\end{equation}
With respect to $\mu$, the moments can be written as:
\begin{equation}
w_n = \int W^{n-1} \, d\mu, \quad
w_{n-1} = \int W^{n-2} \, d\mu.
\end{equation}
We now consider $\mathcal{L}_p$ norms with respect to this probability measure, defined as 
\begin{equation}
\|f\|_p = \left( \int |f|^p \, d\mu \right)^{1/p}.
\end{equation}

\textit{Hölder's inequality in $\mathcal{L}_p$ space:}
Let $1 \leq p < \infty$ and $1 < q < \infty$ be such that $\frac{1}{p} + \frac{1}{q} = 1$. For functions $f \in \mathcal{L}_p$ and $g \in \mathcal{L}_q$, their product $fg$ belongs to $\mathcal{L}_1$. Furthermore, Hölder's inequality states that
\begin{equation}
     \|fg\|_{1} \leq \|f\|_{p} \, \|g\|_{q}. \label{eq: holder}
\end{equation}
As a special case, when $p = q = 2$, this inequality reduces to the well-known Cauchy-Schwarz inequality. Here we choose,
\begin{equation}
f = W^{n-2}, \quad g = 1,
\end{equation}
and exponents
\begin{equation}
p = \frac{n-1}{n-2}, \quad q = n-1.
\end{equation}
Then, using Eq.~\eqref{eq: holder} we get,
\begin{equation}
\|W^{n-2}\|_1 \le \|W^{n-2}\|_p \|1\|_q.
\end{equation}
We now evaluate each term;
\begin{align}
\|W^{n-2}\|_1 &= \int |W^{n-2}| \, d\mu \nonumber \\
& = \int W^{n-2} \, d\mu \qquad \text{(since $W\geq0$)} \nonumber\\
&= w_{n-1},\\
\|W^{n-2}\|_p
&= \left( \int |W^{n-2}|^p \, d\mu \right)^{1/p} \nonumber  \\
&= \left( \int W^{(n-2) \left( \frac{n-1}{n-2} \right)} \, d\mu \right)^{1/p} \nonumber  \\
&=\left( \int W^{n-1} \, d\mu \right)^{1/p} \nonumber \\
&=w_n^{1/p}.
\end{align}
Finally,
\begin{equation}
\|1\|_q = \left( \int  d\mu \right)^{1/q} = 1,
\end{equation}
since $\mu$ is a probability measure. Therefore,
\begin{equation}
w_{n-1} \le w_n^{1/p}.
\end{equation}
Substituting $p = \frac{n-1}{n-2}$, we obtain
\begin{equation}
w_{n-1} \le w_n^{\frac{n-2}{n-1}}.
\end{equation}
Raising both sides to the power $(n-1)$ gives
\begin{align}
&(w_{n-1})^{n-1} \le (w_n)^{n-2} \nonumber\\
\Rightarrow \;\;&(w_n)^{n-2} - (w_{n-1})^{n-1} \geq 0,
\end{align}
which completes the proof.

\subsection{Proof of Theorem 3}

With the measure $d\mu = W\, dx^N dp^N$, which is positive and normalized, we can write the moments as
\[
w_n = \int W^{n-1} d\mu.
\]
Applying the Cauchy--Schwarz inequality to the functions
\begin{align}
    f = W^{\frac{n-2}{2}}, \qquad g = W^{\frac{n}{2}},
\end{align}
we obtain
\begin{align*}
\int W^{n-1} d\mu
& \le
\left( \int W^{n-2} d\mu \right)^{\frac{1}{2}}
\left( \int W^n d\mu \right)^{\frac{1}{2}}.
\end{align*}
Taking square on both sides we get the set of conditions,
\begin{align}
&w_n^2 \le w_{n-1} w_{n+1} \; \\
\Rightarrow \quad&\; w_{n-1} w_{n+1} -w_n^2 \ge 0.
\end{align}
This completes the proof.

\subsection{Proof of Theorem 4}
We now aim to prove that for any positive normalized Wigner function $W$, the Hankel matrix $H_n(\boldsymbol{w})$ with elements, $[H_n(\boldsymbol{w})]_{ij} = w_{i+j+1}$, $i,j \in \{0,1,\dots,n\}$ is positive semidefinite for all $n \in \mathbb{N}$. Let $\boldsymbol{y} = (y_0, y_1, \dots, y_n) \in \mathbb{R}^{n+1}$ be an arbitrary
real vector. Consider the quadratic form associated with the Hankel matrix
$H_n(\boldsymbol{w})$:
\begin{equation}
\begin{split}
    \boldsymbol{y}^T H_n(\boldsymbol{w}) \boldsymbol{y}
    &= \sum_{i=0}^m \sum_{j=0}^m y_i y_j \, w_{i+j+1}.\\
    &= \sum_{i=0}^n \sum_{j=0}^n y_i y_j
    \int_{\mathbb{R}^{2N}} W^{\,i+j+1} \, d^N x \, d^N p .
\end{split}
\end{equation}

Since $W \ge 0$ and the summations are finite, Fubini's theorem allows us to interchange the order of summation and integration, yielding
\begin{equation}
\begin{split}
    \boldsymbol{y}^T H_n(\boldsymbol{w}) \boldsymbol{y}
    &= \int_{\mathbb{R}^{2N}}
    \left(
        \sum_{i=0}^n \sum_{j=0}^n
        y_i y_j W^{\,i+j+1}
    \right)
    d^N x\, d^N p .
\end{split}
\end{equation}

We now rewrite the integrand by factoring out a single power of $W$:
\begin{equation}
\begin{split}
    \sum_{i,j=0}^n y_i y_j W^{i+j+1}
    &= W
    \sum_{i,j=0}^N
    y_i y_j W^i W^j .
\end{split}
\end{equation}
The double sum appearing above can be recognized as a perfect square,
\begin{equation}
\begin{split}
    \sum_{i,j=0}^n
    y_i y_j W^i W^j
    =
    \left(
        \sum_{i=0}^n y_i W^i
    \right)^2 .
\end{split}
\end{equation}
Therefore, the quadratic form can be written as,
\begin{equation}
\begin{split}
    \boldsymbol{y}^T H_n(\boldsymbol{w}) \boldsymbol{y}
    &=
    \int_{\mathbb{R}^{2N}}
    W
    \left(
        \sum_{i=0}^n y_i W^i
    \right)^2
    d^N x\, d^N p .
\end{split}
\end{equation}

By assumption, $W \ge 0$ in all phase points, and the squared term is
manifestly nonnegative. Hence, the integrand is nonnegative pointwise over
phase space, implying
\begin{equation}
    \boldsymbol{y}^T H_n(\boldsymbol{w}) \boldsymbol{y} \ge 0
    \qquad \forall \ \boldsymbol{y} \in \mathbb{R}^{n+1}.
\end{equation}
This establishes that the Hankel matrix $H_n(\boldsymbol{w})$ is positive
semidefinite, completing the proof.

\subsection{Proof of Corollary 1}
We first prove the forward implication. For any $m < n$, the matrix $H_m(\boldsymbol{w})$ is a principal submatrix of $H_n(\boldsymbol{w})$, obtained by restricting to indices $i,j \in \{0,1,\dots,m\}$. Since principal submatrices of positive semidefinite matrices are themselves positive semidefinite, $H_n(\boldsymbol{w}) \succeq 0 \;\Rightarrow\; H_m(\boldsymbol{w}) \succeq 0 \quad \forall m<n$.

For the converse, if $H_n(\boldsymbol{w})\not\succeq0$, then there exists a vector $\boldsymbol{y}\in\mathbb{R}^{n+1}$ such that $\boldsymbol{y}^TH_n(\boldsymbol{w})\boldsymbol{y}<0$. Now considering any $m > n$, define a vector $\tilde{\boldsymbol{y}} \in \mathbb{R}^{m+1}$ by padding zeros, i.e., $\tilde{\boldsymbol{y}} := (y_0, y_1, \dots, y_n, 0, \dots, 0)$. Then, $\tilde{\boldsymbol{y}}^T H_m(\boldsymbol{w}) \tilde{\boldsymbol{y}} = \boldsymbol{y}^T H_n(\boldsymbol{w}) \boldsymbol{y} < 0$, since the additional rows and columns do not contribute. Hence, $H_m(\boldsymbol{w}) \not\succeq 0$ for all $m \geq n$.

\subsection{Comparison between the LP, LC and Hankel hierarchies}
\label{app:comparison}

In the main text, we have introduced three different hierarchies of moment-based conditions for detecting Wigner negativity, namely: (i) the LP hierarchy, (ii) the LC hierarchy, and (iii) the Hankel-matrix hierarchy. While all these conditions follow from positivity of the Wigner function, they capture different structural aspects of the moment sequence. It is natural to ask how they are related to one another and which hierarchy provides the strongest detection power. In this section, we establish the precise analytical hierarchy among these conditions.

\vspace{5mm}
\noindent \textbf{Proposition 2.} \textit{If the Hankel matrix $H_n(\boldsymbol{w})$ is positive semidefinite, then the moments satisfy the log-convexity condition}
\begin{equation}
LC(k) =  w_{k-1}w_{k+1} -w_k^2 \geq 0,
\qquad \forall \; k\le 2n .
\end{equation}

\begin{proof}
Since $H_n(\boldsymbol{w})$ is positive semidefinite, every principal submatrix of $H_n(\boldsymbol{w})$ must also be positive semidefinite. Consider the $2\times2$ principal submatrix
\begin{equation}
\begin{bmatrix}
w_{k-1} & w_k \\
w_k & w_{k+1}
\end{bmatrix}, \qquad k\leq2n.
\end{equation}
Positivity of this matrix implies that its determinant is non-negative, which gives $w_{k-1}w_{k+1}-w_k^2\ge0$. 
\end{proof}

This implies that, Hankel matrix condition is stronger than the LC condition when the maximum available moment order is fixed. Consequently, violation of any such LC condition, guaranties the violation of the Hankel condition. However the opposite is not true, which is supported by the examples studied in this letter. Next, we compare the LC hierarchy with the LP hierarchy.

\vspace{5mm}
\noindent \textbf{Proposition 3.} \textit{For a fixed maximum accessible moment order, satisfaction of the LC condition guarantees satisfaction of the corresponding LP condition.}
\begin{proof}
The proof proceeds by mathematical induction. The minimal condition for positivity in the LC hierarchy is $w_1w_3 - w_2^2 \geq 0$. Since $w_1=1$, this reduces to $w_3\ge w_2^2$, which is precisely the $LP(3)$ condition, which means at the lowest order, satisfaction of LC implies satisfaction of LP. Now, taking $(n-1)$th power in the $LC(n)$ condition, we get
\begin{align} \label{eq:LC_LP_proof}
w_{n+1}^{\,n-1}w_{n-1}^{\,n-1}\ge w_n^{\,2n-2}.
\end{align}

Now let us assume that the LP condition at order $n$ holds, i.e. 
\begin{equation}
w_n^{\,n-2}\ge w_{n-1}^{\,n-1}.
\label{eq:LP_assumption}
\end{equation}

Applying Eq.~\eqref{eq:LP_assumption} on the right hand side of Eq.~\eqref{eq:LC_LP_proof},
\begin{align}
& \;\; w_{n+1}^{\,n-1}w_{n-1}^{\,n-1}
\ge
 w_n^{2n-2}=w_n^n \cdot  w_n^{\,n-2} \nonumber \\
 \Rightarrow & \;\; w_{n+1}^{\,n-1}w_{n-1}^{\,n-1} \ge w_n^n w_{n-1}^{\,n-1} \nonumber \\
 \Rightarrow & \;\; w_{n+1}^{\,n-1}\ge w_n^n \; ,\nonumber
\end{align}
which is exactly the LP condition at order $n+1$. Thus the proof is completed using mathematical induction.
\end{proof}

Combining these results, we obtain the hierarchy of strength for a fixed maximum accessible moment order,
\begin{equation}
\text{Hankel} \Longrightarrow \text{LC} \Longrightarrow \text{LP}.
\end{equation}

To justify the relative strength of the hierarchies, we emphasize that the LP conditions involve only two consecutive moments and therefore probe only limited information about the full moment sequence. In contrast, the LC conditions involve correlations among three neighboring moments and are therefore strictly stronger.
Similarly, the LC conditions arise only from positivity of the $2\times2$ principal minors of the Hankel matrices. Positivity of the full Hankel matrices imposes significantly stronger constraints, since it incorporates all moments up to a given order simultaneously. Consequently, there exist moment sequences satisfying all LC conditions while still violating positivity of the corresponding Hankel matrix.

Therefore, when the maximum accessible moment order is fixed, the Hankel hierarchy always provides the strongest detection criterion, followed by the LC hierarchy and finally the LP hierarchy. This explains the behavior observed in the examples studied in the main text, where the Hankel conditions consistently detect the largest set of Wigner-negative states.

\section{E. Quantifying Wigner negativity using moments}\label{SM_E}

A natural way to quantify Wigner negativity is through the absolute volume of the Wigner function. For a $N$-mode state with Wigner function $W$, the negativity measure is given by
\begin{equation}
    \mathcal{N}(W) = \int |W(\vec{x},\vec{p})| \, d^N x \, d^N p .
\end{equation}
For all quantum states with a positive Wigner function, one has $\mathcal{N}(W) = 1$, since $W$ is normalized. In contrast, whenever $W$ attains negative values,  the positive and negative contributions no longer cancel under the absolute value and thus $\mathcal{N}(W) > 1$. Hence, $\mathcal{N}(W)$ serves as a faithful indicator of Wigner negativity. In analogy with logarithmic negativity in entanglement theory \cite{vidal2002computable}, the logarithmic measure of Wigner negativity can be defined as \cite{PhysRevA.98.052350}
\begin{equation}
    L(W) = \log \left( \int |W| \, d^N x \, d^N p \right) = \log(\tilde{w}_1),
\end{equation}
By construction, $L(W)=0$ for all Wigner-positive states and increases monotonically with the amount of negativity. However, evaluating this quantity requires pointwise knowledge of $|W|$, making it experimentally demanding as it generally necessitates full state tomography. This motivates the development of alternative quantifiers that can be estimated efficiently from limited moment data, without requiring full reconstruction of the state.

To connect these quantifiers with the moment-based framework developed in this letter, it is natural to consider moments of the non-negative function $|W|$. We therefore define the absolute-valued Wigner moments as
\begin{equation}
    \tilde{w}_m = \int |W(\vec{x},\vec{p})|^m \, d^N x \, d^N p \;, \qquad m \in \mathbb{N}.
\end{equation}

Since $|W|$ is non-negative, but generally not normalized whenever the Wigner function contains negative regions, it is natural to associate with it the normalized probability distribution
\begin{equation}
    f(\vec{x},\vec{p}) = \frac{|W(\vec{x},\vec{p})|^2}{\tilde{w}_2},
\end{equation}
which satisfies $f\ge 0$ and $\int f =1$.

We define the R\'enyi entropy of order $r$ associated with this distribution as
\begin{equation}
    {R}_r(W) = \frac{1}{1-r} \log \left( \int f(\vec{x},\vec{p})^r \, d^N x \, d^N p \right).
\end{equation}
Using the definition of absolute-valued Wigner moments, this can be written as
\begin{equation}
    {R}_r(W) = \frac{1}{1-r} \log \left( \frac{\tilde{w}_{2r}}{\tilde{w}_2^r} \right).
\end{equation}
We now define the following quantity:

\begin{align}
    L_r(W)
    &= \frac{1}{2-r} \left[ \log(\tilde{w}_r) + (1 - r)\log(\tilde{w}_2) \right] \\
     &= \frac{1}{2} \left( {R}_{r/2}(W) - \log(\tilde{w}_2) \right).
\end{align}

We note that $\tilde{w}_r = w_r$ whenever $r$ is an even integer. This also clarifies the choice of $\tilde{w}_2$ in the definition of $f(\vec{x},\vec{p})$, as it is the lowest-order nontrivial absolute Wigner moment that can be directly estimated within the framework developed in this letter. As a result, $L_r(W)$ can be efficiently evaluated for any even $r$. Importantly, the required moments are extracted from the same measurement data used for negativity detection, enabling simultaneous detection and quantification without additional experimental overhead.

Before proving the results stated in the main text, we first establish a standard property of the R\'enyi entropies associated with the normalized distribution $f(\vec{x},\vec{p})$. This observation will play a central role in identifying the optimal member of the moment-based quantifier hierarchy.

\begin{lemma}[Monotonicity of R\'enyi entropies] \label{lemma:renyi}
Let $R_r(W)$ denote the R\'enyi entropy associated with the probability distribution $f(\vec{x},\vec{p})$. Then, for $r<s$,
\begin{equation}
R_r(W)\ge R_s(W).
\end{equation}
\end{lemma}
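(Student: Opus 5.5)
The plan is to prove the monotonicity of the R\'enyi entropies of the normalized distribution $f=|W|^2/\tilde{w}_2$ through the log-convexity of the power integrals $\phi(r):=\log\int f^r\,d^Nx\,d^Np$, defined for $r>0$. The starting observation is that $R_r(W)=\phi(r)/(1-r)$ and $\phi(1)=0$ by normalization. Hence
\[
R_r(W)=-\frac{\phi(r)-\phi(1)}{r-1}
\]
is minus the slope of the secant of $\phi$ through the points $1$ and $r$. If $\phi$ is convex, secant slopes from a fixed base point are nondecreasing in the other endpoint. This makes $R_r$ nonincreasing in $r$ on each side of $r=1$ and across it, which is exactly the claim $R_r(W)\ge R_s(W)$ for $r<s$.

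The key step is therefore convexity of $\phi$, which I would obtain from H\"older's inequality exactly as in the proof of Theorem~\ref{theorem: holder}. For $r_0,r_1>0$ and $t\in(0,1)$, write $f^{(1-t)r_0+tr_1}=(f^{r_0})^{1-t}(f^{r_1})^{t}$. Applying H\"older with exponents $1/(1-t)$ and $1/t$ gives
\[
\int f^{(1-t)r_0+tr_1}\le\Big(\int f^{r_0}\Big)^{1-t}\Big(\int f^{r_1}\Big)^{t}.
\]
Taking logarithms yields $\phi((1-t)r_0+tr_1)\le(1-t)\phi(r_0)+t\phi(r_1)$. The same inequality follows more conceptually from the Cauchy--Schwarz/Hankel-type argument used in Theorem~\ref{theorem: cauchy}, but H\"older is cleaner for real exponents.

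With convexity established, I would finish with the three-slope lemma. For a convex $\phi$ and a fixed point $a=1$, the map $r\mapsto(\phi(r)-\phi(1))/(r-1)$ is nondecreasing on $(0,1)\cup(1,\infty)$. At $r=1$, $R_1$ is defined as the limit, namely the Shannon entropy, and the left and right derivatives of $\phi$ at $1$ sandwich it, so monotonicity extends through $r=1$. Negating the slope gives $R_r\ge R_s$.

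The main obstacle is not the inequality itself but integrability and degenerate cases. One must check that $\int f^r<\infty$ for the relevant $r$, so that $\phi$ is finite. For $r\ge 1/2$ this follows because $f^r\propto|W|^{2r}$ with $2r\ge1$, $W$ is bounded by $1/\pi^N$, and $W\in L^1\cap L^\infty$. For smaller $r$, one restricts to the range where the integrals are finite; since only $r=r/2$ with $r>2$ is needed later, $r\ge1$ suffices. One should also note that if $f$ is a.e.\ a multiple of an indicator function, then equality holds.
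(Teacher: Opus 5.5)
Your proof is correct, but it takes a different route from the paper. The paper differentiates. With $S(r)=\int f^r$ and the escort density $p_r=f^r/S(r)$, it computes
\[
\frac{d}{dr}R_r(W)=-\frac{1}{(1-r)^2}\,D(p_r\,\|\,f)\le 0,
\]
and concludes from nonnegativity of the Kullback--Leibler divergence.

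You instead prove convexity of $\phi(r)=\log S(r)$ by H\"older (Lyapunov's inequality). You then read $R_r=-(\phi(r)-\phi(1))/(r-1)$ as a negated chord slope through the fixed point $(1,0)$, so the three-chord lemma gives the monotonicity.

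What your route buys:
\begin{itemize}
\item It is derivative-free, so it needs only finiteness of $S$ on the relevant interval. The paper's identity tacitly differentiates $\int f^r$ under the integral sign, which requires a domination condition on $f^r\log f$ that is never checked; it is folded into ``straightforward algebra''.
\item It treats $r<1$, $r>1$ and the crossing of $r=1$ uniformly. The paper's expression is of the form $0/0$ at $r=1$ and needs a continuity argument there.
\item It is the continuous-order version of the H\"older/Cauchy--Schwarz log-convexity behind Theorems 2 and 3, so it uses the paper's own toolkit.
\end{itemize}

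What the paper's route buys is an explicit rate of decrease and a transparent equality condition: $D(p_r\|f)=0$ exactly when $f$ is constant on its support. You have to argue that separately, which your indicator-function remark does.

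One correction to your integrability paragraph: $W\in L^1$ is not automatic for density operators. For the pure state whose wave function is the indicator of $[-\tfrac12,\tfrac12]$, one gets $W(x,p)=\sin\!\big(p(1-2|x|)\big)/(\pi p)$ for $|x|<\tfrac12$, which is not absolutely integrable in $p$. So $L^1$ is an assumption implicit in $L(W)$ being finite, not a general fact. You do not need it for the orders $r\ge 1$ used in Proposition 1. There $\int|W|^{2r}\le \pi^{-N(2r-2)}\int|W|^2$, and $\int|W|^2=\mathrm{Tr}\,\rho^2/(2\pi)^N$ is always finite.
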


\begin{proof}
Let
\begin{equation}
S(r)=\int f(\vec{x},\vec{p})^r\,d^Nx\,d^Np,
\end{equation}
so that
\begin{equation}
R_r(W)=\frac{\log S(r)}{1-r}.
\end{equation}
Defining $p_r(\vec{x},\vec{p}) = {f(\vec{x},\vec{p})^r}/{S(r)}$, which is a probability distribution by definition, one finds after some straightforward algebra that

\begin{equation}
\frac{d}{dr}R_r(W)
=-\frac{1}{(1-r)^2}
D\!\left(p_r\,|\,f\right),
\end{equation}
where
\begin{equation}
D(p_r|f)
=
\int p_r
\log\!\left(\frac{p_r}{f}\right)
d^Nx\,d^Np
\end{equation}
is the Kullback--Leibler divergence. Since $D(p_r|f)\ge0$, it immediately follows that
\begin{equation}
\frac{d}{dr}R_r(W)\le0.
\end{equation}
Hence $R_r(W)$ is non-increasing in $r$, proving the claim.
\end{proof}

\subsection*{Proof of Theorem~5}

Using the definitions of $L_r(W)$ and $L(W)$,
\begin{equation}
L_r(W)-L(W)
=
\frac{1}{2-r}
\log\!\left(
\frac{\tilde{w}_r\,\tilde{w}_1^{\,r-2}}
{\tilde{w}_2^{\,r-1}}
\right).
\end{equation}
Since $r>2$, we have $2-r<0$. Therefore it suffices to prove
\begin{equation}
\tilde{w}_2^{,r-1}
\le
\tilde{w}_1^{,r-2}\tilde{w}_r.
\end{equation}

Applying the interpolation inequality for $L_p$ norms,
\begin{equation}
|W|_2
\le
|W|_1^{\frac{r-2}{2(r-1)}}
|W|_r^{\frac{r}{2(r-1)}},
\end{equation}
and raising both sides to the power $2(r-1)$ yields
\begin{equation}
|W|_2^{\,2(r-1)}
\le
|W|_1^{\,r-2}
|W|_r^{\,r}.
\end{equation}
Using
\begin{equation}
|W|_1=\tilde{w}_1,\qquad
|W|_2^2=\tilde{w}_2,\qquad
|W|_r^r=\tilde{w}_r,
\end{equation}
we obtain
\begin{equation}
\tilde{w}_2^{\,r-1}
\le
\tilde{w}_1^{\,r-2}\tilde{w}_r,
\end{equation}
which proves
\begin{equation}
L_r(W)\le L(W).
\end{equation}

\subsection*{Proof of Corollary 2}

If $W(\vec{x},\vec{p})\ge0$, then
$\tilde{w}_1
=
1$
and therefore $L(W)=0$. Then Theorem~4 immediately gives
\begin{equation}
L_r(W)\le L(W)=0.
\end{equation}

Conversely, if $L_r(W)>0$, then Theorem~4 implies $L(W)>0$, which can only occur when the Wigner function attains negative values. This completes the proof.

\subsection*{Proof of Proposition~1}

From
$L_r(W)
=
\frac12
\left(
R_{r/2}(W)-\log\tilde{w}_2
\right)$,
together with Lemma~\ref{lemma:renyi}, it follows immediately that
\begin{equation}
L_r(W)\le L_s(W),
\qquad r>s,
\end{equation}
because the term $\tilde{w}_2$ is independent of $r$. Hence the hierarchy $\{L_r(W)\}_{r>2}$ is monotonically decreasing. Restricting to experimentally accessible even orders yields
\begin{equation}
L_4(W)\ge L_6(W)\ge L_8(W)\ge \cdots.
\end{equation}
Since every $L_r(W)$ provides a certifiable lower bound to the logarithmic Wigner negativity, the largest member of this hierarchy furnishes the tightest bound. Hence, $L_4(W)$ is the optimal experimentally accessible moment-based quantifier.
\qed

\subsection*{Other properties of $L_r(W)$}
We now discuss some other fundamental properties of the quantity $L_r(W)$, which further clarify its role as a quantitative measure of Wigner negativity. 

\begin{enumerate}
    \item[(i)] \textbf{Invariance under Gaussian unitaries.}
    For any symplectic transformation $S$ in phase space,
    \begin{equation}
        L_r(W \circ S) = L_r(W).
    \end{equation}
\begin{proof}
    Under a Gaussian unitary, the Wigner function transforms according to
\begin{equation}
    W(\vec{x},\vec{p}) \mapsto W(S^{-1}(\vec{x},\vec{p})),
\end{equation}
where $S$ is a symplectic transformation preserving phase-space volume. Since the Lebesgue measure is invariant under such transformations, it follows that
\begin{equation}
    \tilde{w}_k = \int |W(\vec{x},\vec{p})|^k \, d^N x \, d^N p
\end{equation}
remains unchanged for all $k$. As $L_r(W)$ depends only on $\tilde{w}_r$ and $\tilde{w}_2$, it is therefore invariant under Gaussian unitaries.
\end{proof}

    \item[(ii)] \textbf{Additivity.}
    For product states,
    \begin{equation}
        L_r(W_1 \otimes W_2)
        =
        L_r(W_1) + L_r(W_2).
    \end{equation}

    \begin{proof}
        For a product state, the Wigner function factorizes as
\begin{equation}
    W_{12}(\vec{x}_1,\vec{p}_1,\vec{x}_2,\vec{p}_2)
    =
    W_1(\vec{x}_1,\vec{p}_1)\, W_2(\vec{x}_2,\vec{p}_2).
\end{equation}
Consequently,
\begin{equation}
    \tilde{w}_k(W_1 \otimes W_2)
    =
    \int |W_1 W_2|^k
    =
    \tilde{w}_k(W_1)\,\tilde{w}_k(W_2),
\end{equation}
for all $k$. Substituting this multiplicative relation into the definition of $L_r(W)$ and using the additivity of the logarithm yields
\begin{equation}
    L_r(W_1 \otimes W_2)
    =
    L_r(W_1) + L_r(W_2).
\end{equation}
    \end{proof}
\end{enumerate}

\subsection*{Illustrative examples}

To demonstrate the effectiveness of the proposed moment-based quantifiers, we evaluate them for two representative classes of non-Gaussian states, as illustrated in Fig.~\ref{fig:measure}. 
The first example we consider here is the $N00N$ state, which is a two-mode quantum state where all $N$ photons can be found in either one of the two modes $a$ and $b$, while the other remains vacant. It is a maximally path-entangled number state of the form
\begin{equation}
    |\psi, N\rangle = \frac{1}{\sqrt{2}}\left(|N\rangle_{a} |0\rangle_{b} + e^{i\phi}|0\rangle_{a} |N\rangle_{b}\right).
\end{equation}

For the choice $\phi = \pi$, the corresponding Wigner function in terms of the dimensionless quadratures $(x_1,p_1)$ and $(x_2,p_2)$ is given by \cite{agarwal2012quantum}
\begin{equation}
    \begin{split}
         W_{N}(x_1,p_1,x_2,p_2) = \frac{1}{2\pi^{2}N!} e^{-(x_1^2+x_2^2+p_1^2+p_2^2)} &\times \Big[ -2^{N} \big( (x_1 + ip_1)^{N} (x_2 - ip_{2})^{N}  +  (x_1 - ip_1)^{N} (x_2 + ip_{2})^{N} \big)  \\
         &+(-1)^{N} N!\big( \mathcal{L}_{N}[2(x_1^2+p_1^2)]  + \mathcal{L}_{N}[2(x_2^2+p_2^2)]\big)\Big],
    \end{split}
\end{equation}
where $\mathcal{L}_N(x)$ denotes the Laguerre polynomial. This state exhibits Wigner negativity for all $N \geq 1$.

The second example we consider here is the photon added coherent state, which is obtained by repeated application of the photon creation operator on a coherent state, which initially has properties like a classical field. The state is defined by

\begin{align}
    \ket{\alpha, m} = \frac{ {a^{\dagger}}^m \ket{\alpha} }{ (\langle\alpha|a^m {a^\dagger}^m | \alpha\rangle)^{1/2}}.
\end{align}
Here $\ket{\alpha}$ is a coherent state, with $\alpha \in\mathbb{C}$ and m is an integer. The Wigner function for this state is given by \cite{agarwal_1991_photon_added},
\begin{align}
    W(z) = \frac{2 (-1)^m \mathcal{L}_m(|2z-\alpha|^2)) }{\pi \mathcal{L}_m(-|\alpha|^2)} \exp( -2|z-\alpha|^2 ), \;\; z\in \mathbb{C}
\end{align}

For all of these families of states, the logarithmic Wigner negativity $L(W)$ increases monotonically with the excitation number, indicating the enhancement of nonclassical features (see Fig.~\ref{fig:measure}). The quantities ${L}_4(W)$ and ${L}_6(W)$ consistently act as lower bounds to ${L}_{N}(W)$, in accordance with our theoretical framework. Furthermore, the ordering ${L}_4(W) \geq {L}_6(W)$ is observed across all cases, with lower-order moments providing tighter bounds. These observations confirm that the proposed moment-based measures successfully capture both qualitative trends and quantitative aspects of Wigner negativity.
At the same time, there exist Wigner-negative states for which \(L_r(W)\) does not yield a strictly positive value. This is because the quantities \(L_r(W)\) are constructed from only a finite set of low-order Wigner moments and therefore capture only partial information about the underlying phase-space distribution. The resulting loss of sensitivity is the natural trade-off for obtaining experimentally accessible and efficiently computable lower bounds on Wigner negativity.

\begin{figure*}[t]
    \centering
    \begin{subfigure}{0.44\textwidth}
        \centering
        \includegraphics[width=\textwidth]{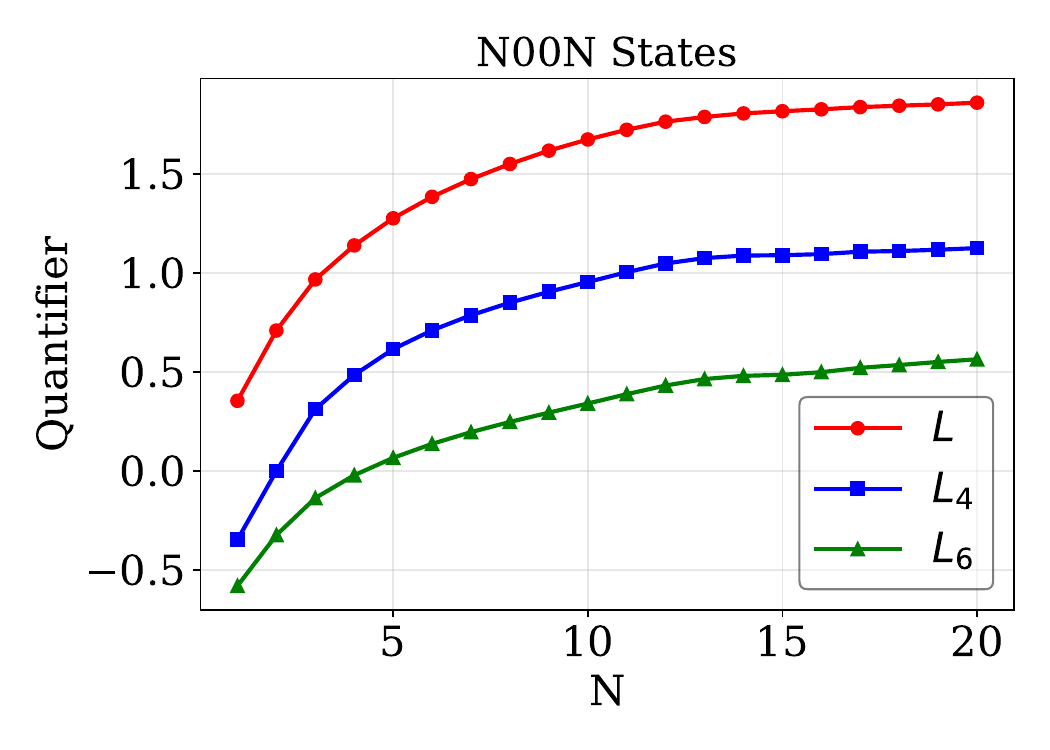}
        \caption{}
        \label{fig:L_NOON}
    \end{subfigure}
    \begin{subfigure}{0.44\textwidth}
        \centering
        \includegraphics[width=\textwidth]{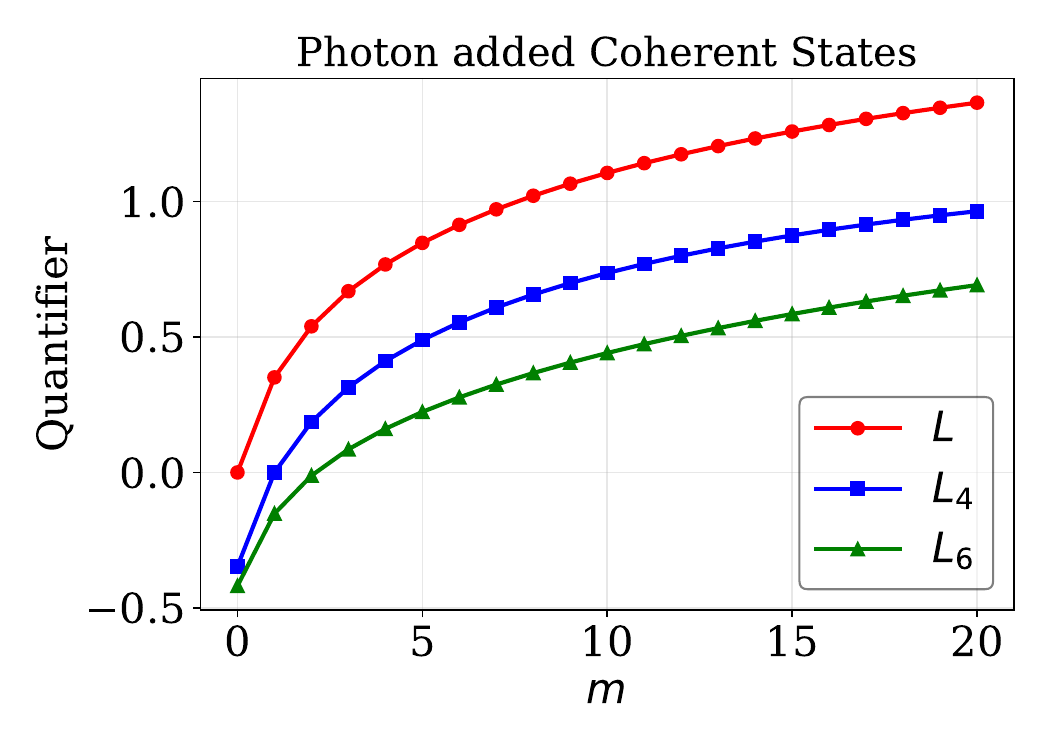}
        \caption{}
        \label{fig:L_PACS}
    \end{subfigure}
\caption{
\small
\justifying
Quantification of Wigner negativity using moment-based measures for representative non-Gaussian states.
(a) $N00N$ states $\{ \ket{\psi, N} \}$: The logarithmic Wigner negativity ${L}(W)=\log\!\int |W|$ and the corresponding moment-based quantities ${L}_4(W)$ and ${L}_6(W)$ are plotted as functions of the particle number $N$. (b) Photon added coherent states $\ket{\alpha,m}$: The Wigner negativity quantifiers are plotted as a function of $m$, the number of photons added in a coherent state with $\alpha=0.1$. The hierarchy among the moment-based measures is maintained in both the plots, demonstrating their consistency as lower bounds.
}
\label{fig:measure}
\end{figure*}

\section{F. Applications: Detection of Quantum Entanglement via Wigner Moments}\label{SM_F}
Direct links between entanglement and Wigner negativity have been recently
uncovered \cite{Zaw2024_bipartite_entanglement, zaw2025_GME_1}, shedding light on the structure of quantum correlations. In particular, negativities in appropriately constructed Wigner functions, such as reduced or collective phase-space distributions have been shown to imply the presence of bipartite entanglement or even genuine multipartite entanglement. 
We exploit this connection to lift our moment-based nonclassicality criteria to the detection of quantum entanglement.

\subsection{Detection of bipartite entanglement}
\label{subsec:bipartite}

First we demonstrate how the moment-based criteria developed in the previous sections can be employed to detect bipartite entanglement in CV systems utilizing the recently established connections between Wigner negativity and the negativity of the partial transposed density matrix.

Let $\rho$ be an $N$-mode CV state. We consider an \emph{equal bipartition} of the modes into two subsystems,
\begin{equation}\label{eq:equal_bipartition}
\vec{a}_A := \{a_m\}_{m=1}^{N/2}, \qquad
\vec{a}_B := \{a_m\}_{m=N/2+1}^{N}.
\end{equation}
We denote the partial transpose of $\rho$ over the modes $\vec{a}_B$ as $\rho^{T_B}$.
Then entanglement across this bipartition can be characterized using the negativity of the partial transpose criterion: if the partially transposed state $\rho^{T_B}$ is not positive semidefinite, then $\rho$ is necessarily entangled. 

A powerful connection arises from a phase-space formulation of the PPT condition. In particular, it has been shown that bipartite entanglement can be inferred from negativities in suitably defined reduced Wigner functions. To describe this construction, we introduce collective modes defined componentwise as
\begin{equation}
\vec{a}_{\pm} = \frac{1}{\sqrt{2}}\left(\vec{a}_A \pm \vec{a}_B\right),
\end{equation}

Given the state $\rho$, we calculate the reduced state $\mathrm{Tr}_{-}\rho$,
obtained by tracing out all relative modes $\vec{a}_-$. The Wigner function of this reduced state, denoted $W_{\mathrm{Tr}_{-}\rho}$, is obtained from the full Wigner function $W_\rho$ via marginalization over the phase-space variables associated with $\vec{a}_-$. A key result, proven in Ref.~\cite{Zaw2024_bipartite_entanglement}, establishes that negativity of this reduced Wigner function implies violation of the PPT condition. More precisely, one has
\begin{equation}
W_{\mathrm{Tr}_{-}\rho}(\vec{\alpha}) \not\geq 0
\;\;\Longrightarrow\;\;
\rho^{T_B} \not\succeq 0,
\end{equation}
which certifies NPT entanglement across the chosen bipartition. This result provides a direct operational link between phase-space nonclassicality and bipartite entanglement.

Importantly, this approach reduces the problem of entanglement detection to identifying negativities in a reduced phase-space distribution. However, directly resolving such negativities typically requires high-resolution phase-space measurements. This motivates the use of moment-based criteria, which provide coarse-grained yet experimentally accessible signatures of Wigner negativity. Combining these observations, we obtain the following result.

\begin{proposition} \label{proposition: bipartite}
Let $\rho$ be an $N$-mode CV state with an equal bipartition as defined in Eq.~\eqref{eq:equal_bipartition}, and let $W_{\mathrm{Tr}_{-}\rho}$ denote the Wigner function of the reduced state obtained by tracing out the relative modes $\vec{a}_-$. If $W_{\mathrm{Tr}_{-}\rho}$ violates any of the moment-based criteria in Theorems~1, 2 or 3, then $\rho$ is NPT entangled.
\end{proposition}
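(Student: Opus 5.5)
The argument is a two-step implication chain. The first step uses the moment criteria applied to the reduced state; the second uses the phase-space PPT result of Ref.~\cite{Zaw2024_bipartite_entanglement}.

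First I fix what ``the moment-based criteria'' means here. The proposition refers to Theorems~1, 2 or 3, but Theorem~1 is the operator representation, not a criterion. I will therefore read the statement as covering the detection hierarchies: the $\mathcal{L}_p$ condition $LP(n)\ge 0$ (Theorem~\ref{theorem: holder}), the log-convexity condition $LC(n)\ge0$ (Theorem~\ref{theorem: cauchy}), and Hankel positivity $H_n(\boldsymbol{w})\succeq0$ (Theorem~\ref{theorem:wigner_theorem}). The positive quantifier $L_r>0$ of Corollary~\ref{corollary:L_r_as_witness} can be included as well.

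Let $\sigma:=\mathrm{Tr}_{-}\rho$ and define its moments
\[
w_n(\sigma)=\int W_\sigma^n\, d^{N/2}x\, d^{N/2}p .
\]
The key observation is that $\sigma$ is a legitimate $N/2$-mode quantum state. It is positive, has unit trace, and is obtained by a partial trace after the passive Gaussian unitary implementing $\vec a_\pm=(\vec a_A\pm\vec a_B)/\sqrt2$. Its Wigner function $W_\sigma$ is therefore real, normalized and integrable, being the marginal of $W_\rho\circ S^{-1}$ over the $\vec a_-$ variables. This places $\sigma$ within the hypotheses of Theorems~\ref{theorem: holder}, \ref{theorem: cauchy} and \ref{theorem:wigner_theorem}, which are stated for arbitrary $N$-mode Wigner functions. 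One technical point: the moments $w_n(\sigma)$ must be finite for the orders used. This holds because $|W_\sigma|\le \pi^{-N/2}$ (the displaced parity operator has norm bounded by $\pi^{-N/2}$), so $\int |W_\sigma|^n\le \pi^{-(n-2)N/2}\int W_\sigma^2<\infty$ by the purity bound.

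Step one is the contrapositive of each theorem. Suppose $W_\sigma\ge0$ everywhere. Then $d\mu=W_\sigma\,d^{N/2}x\,d^{N/2}p$ is a probability measure, and the proofs in SM~D give $LP(n)\ge0$, $LC(n)\ge0$ and $H_n\succeq0$ for the sequence $\{w_n(\sigma)\}$. Hence a violation of any of these by $\{w_n(\sigma)\}$ forces $W_{\mathrm{Tr}_-\rho}\not\ge0$. For the quantifier, $L_r(W_\sigma)>0$ gives $L(W_\sigma)>0$ by Theorem~\ref{theorem:lower_bound_of_L}, and again $W_\sigma\not\ge0$.

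Step two invokes the result of Ref.~\cite{Zaw2024_bipartite_entanglement}, quoted above: $W_{\mathrm{Tr}_-\rho}\not\ge0\Rightarrow\rho^{T_B}\not\succeq0$. This gives NPT entanglement across the equal bipartition, and chaining the two implications finishes the proof.

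The main obstacle is not logical but interpretive. I must make sure the equal-bipartition convention and the componentwise collective modes match exactly those of Ref.~\cite{Zaw2024_bipartite_entanglement}, so that the cited implication applies verbatim. I must also confirm that ``violates'' is understood strictly, for example $\det$ or an eigenvalue of $H_n$ strictly negative, so that a genuine violation rather than a borderline case is what triggers the conclusion. Finally, I would remark operationally that the moments $w_n(\sigma)$ are measurable: apply the 50:50 beam splitters to $\rho$, discard the $\vec a_-$ outputs, and use the multicopy parity scheme of Theorem~\ref{theorem:moment_operator} on the $\vec a_+$ modes. This links the criterion to the experimental protocol, although it is not needed for the logical proof.
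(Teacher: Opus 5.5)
Your proposal is correct and follows the same two-step chain as the paper. A violation of the LP, LC, or Hankel criteria for $W_{\mathrm{Tr}_{-}\rho}$ forces the reduced Wigner function to be negative somewhere, and the cited result of Zaw et al.\ then gives $\rho^{T_B}\not\succeq 0$. Your extra remarks are valid refinements that the paper's one-line proof leaves implicit: the relevant criteria are really Theorems~2--4 rather than Theorem~1, $\mathrm{Tr}_{-}\rho$ is a genuine $N/2$-mode state so those theorems apply, and the moments are finite because $|W|\le\pi^{-N/2}$ and $\int W^2<\infty$.
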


\begin{proof}
The criteria developed in Theorems~1, 2 or 3, provide sufficient conditions for the existence of negativities in a Wigner function. Therefore, violation of any of these criteria for $W_{\mathrm{Tr}_{-}\rho}$ implies that there exists $\vec{\alpha}$ such that
\[
W_{\mathrm{Tr}_{-}\rho}(\vec{\alpha}) < 0.
\]
By the established implication between reduced Wigner negativity and partial transpose negativity, this leads to $\rho^{T_B} \not\succeq 0$. Hence, $\rho$ is NPT entangled.
\end{proof}

The above theorem demonstrates that moment-based phase-space criteria can be systematically promoted to entanglement witnesses. A key advantage of this approach is that it requires only partial information about the Wigner function encoded in a finite set of moments, rather than a full phase-space reconstruction. This makes the resulting witnesses particularly suitable for experimental platforms where phase-space measurements are available but limited.

\subsection{Detection of genuine multipartite entanglement}
\label{subsec:GME}

We now extend the above framework to the multipartite regime, where the structure of quantum correlations becomes significantly richer. In contrast to bipartite entanglement, multipartite systems admit different layers of correlations depending on how the system can be partitioned. The strongest form of such correlations is \emph{genuine multipartite entanglement} (GME), which captures the presence of inseparability across \emph{all} possible bipartitions.

Formally, an $N$-mode state $\rho$ is said to be genuinely multipartite entangled if it cannot be written as a convex mixture of states that are separable with respect to any bipartition, i.e.,
\begin{equation}
\rho \neq \sum_{(A|\bar{A})} \sum_i p_{A}^{(i)} 
\, \rho_{A}^{(i)} \otimes \rho_{\bar{A}}^{(i)}, \qquad \text{where}\;\;\; p_{A}^{(i)} \geq 0.
\end{equation}
$A (\bar{A})$ denote a subset of modes i.e., $A= \{ m_n\}_{n=1}^{M}$ (complementary subset of modes, $\bar{A} = \{m_n\}_{n=1}^{M} \setminus A$) and runs over all bipartitions $(A|\bar{A})$. Such states exhibit correlations that cannot be reproduced by classical coordination across any grouping of subsystems, and play a central role in tasks such as quantum networks, distributed computation, and multipartite quantum communication.

Detecting GME is considerably more challenging than bipartite entanglement. Standard criteria, including PPT-based methods or covariance-matrix approaches, are often either insufficient or experimentally demanding, particularly for non-Gaussian states. This motivates the search for experimentally accessible criteria that can capture genuinely multipartite correlations using limited information.

\medskip

\noindent\textit{Phase-space approach and centre-of-mass mode.}  
Phase-space methods provide a natural route to GME detection by exploiting collective degrees of freedom of multimode systems. In particular, one introduces a \emph{centre-of-mass mode} of the form
\begin{equation}
a_+ = \frac{1}{\sqrt{M}} \sum_{m=1}^M \left( y_m a_m + z_m a_m^\dagger \right),
\end{equation}
where the coefficients $\vec{y}, \vec{z} \in \mathbb{C}^M$ are choosen such that $\vec{y}\circ\vec{y}^* - \vec{z}\circ\vec{z}^* = \vec{1}$, where $[\mathbf{A}\circ\mathbf{B}]_{m,n} = [\mathbf{A}]_{m,n}[\mathbf{B}]_{m,n}$ is the elementwise product and $\vec{1} = (1,1,\dots,1)$ is a vector of ones.

Given the full state $\rho$, one finds the reduced state describing the centre-of-mass motion by tracing out all orthogonal (relative) modes,
$\mathrm{Tr}_{-}\rho$, which retains only the collective degree of freedom associated with $a_+$. The corresponding Wigner function $W_{\mathrm{Tr}_{-}\rho}(\alpha)$ can be obtained from the full Wigner function $W_\rho(\vec{\beta})$ by marginalizing over all relative coordinates.
A key result, established in recent works (see Ref.~\cite{zaw2025_GME_1}), is that negativity of a suitably processed Wigner function of the centre-of-mass mode provides a \emph{sufficient} condition for GME. Since the reduced Wigner function may undergo smoothing due to marginalization, one considers a \emph{smoothed Wigner function} defined via convolution,
\begin{equation}
\widetilde{W}_{\mathrm{Tr}_{-}\rho}(\alpha \;; \mathcal{R})
=
\int d^{2}{\beta} \;\;
W_{\mathrm{Tr}_{-}\rho}(\beta) \,
K(\alpha - \beta \,; \mathcal{R}),
\end{equation}
where $K(\alpha - \beta \,; \mathcal{R})$ is an appropriate kernel determined by auxiliary states or filtering procedures.

The central implication can be stated as follows: if the negativities present in the centre-of-mass Wigner function are sufficiently robust to persist under such smoothing, then the underlying state cannot be decomposed into biseparable components. In particular,
\begin{equation}
\exists \alpha \;:\; \widetilde{W}_{\mathrm{Tr}_{-}\rho}(\alpha \,; \mathcal{R}) < 0
\quad \Longrightarrow \quad
\rho \ \text{is GME}.
\end{equation}

This result establishes a direct and operational bridge between phase-space nonclassicality and genuine multipartite entanglement. Importantly, it shifts the problem of GME detection to identifying negativities in a \emph{single-mode} quasiprobability distribution associated with collective degrees of freedom.

\noindent\textit{Moment-based detection of GME.} While the above criterion is powerful, directly resolving negativities in $\widetilde{W}_{\mathrm{Tr}_{-}\rho}$ may still require fine-grained phase-space measurements. This is precisely where the moment-based hierarchies developed in this work become useful.
Since the $L_p$-norm, log-convexity, and Hankel-matrix criteria provide \emph{sufficient} conditions for Wigner negativity, they can be directly applied to the smoothed centre-of-mass Wigner function to yield experimentally accessible witnesses of GME. We formalize this observation in the following theorem.

\begin{proposition}
Let $\rho$ be an $N$-mode CV state, and let $\widetilde{W}_{\mathrm{Tr}_{-}\rho} (\alpha \, ;\mathcal{R})$ denote a smoothed Wigner function associated with its centre-of-mass mode. If $\widetilde{W}_{\mathrm{Tr}_{-}\rho} (\alpha \, ;\mathcal{R})$ violates any of the moment-based criteria in Theorems~1, 2 or 3, then $\rho$ is genuinely multipartite entangled.
\end{proposition}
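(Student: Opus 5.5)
The argument mirrors Proposition~\ref{proposition: bipartite}: a chain of two implications, the second of which is imported from Ref.~\cite{zaw2025_GME_1}. The plan is to show first that a violated moment criterion forces $\widetilde{W}_{\mathrm{Tr}_{-}\rho}(\alpha;\mathcal{R})$ to take a negative value somewhere, and then to apply the smoothed centre-of-mass criterion quoted just above, $\exists\alpha:\widetilde{W}_{\mathrm{Tr}_{-}\rho}(\alpha;\mathcal{R})<0\Rightarrow\rho$ is GME.

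\emph{Step 1: the LP, LC and Hankel theorems apply to $\widetilde{W}$.} Those proofs (SM~D) use only three properties of the function whose moments are taken: it is real, it is nonnegative, and it has unit integral, so that $d\mu=W\,d^2\alpha$ is a probability measure. $\widetilde{W}$ is real because both $W_{\mathrm{Tr}_{-}\rho}$ and the kernel $K$ are real. It is normalized provided $\int K(\gamma;\mathcal{R})\,d^2\gamma=1$; by Fubini, $\int\widetilde{W}=\int W_{\mathrm{Tr}_{-}\rho}\int K=1$. This normalization holds for the kernels of Ref.~\cite{zaw2025_GME_1}, which are Wigner functions of auxiliary states or normalized Gaussian filters, and I would state it explicitly as a hypothesis on $\mathcal{R}$. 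If $K$ were merely positive with total weight $c$, I would rescale the moments as $w_m\mapsto w_m/c^m$; this does not affect the sign of any criterion, since each LP, LC and Hankel condition is then homogeneous. Finally, each $w_m$ must be finite. This follows because $\widetilde{W}$ is bounded and integrable: boundedness of $W_{\mathrm{Tr}_{-}\rho}$ holds since $|W|\le 1/\pi$ for a single mode, and convolving with an integrable $K$ preserves boundedness and integrability. Hence $\widetilde{W}\in L^1\cap L^\infty\subset L^m$ for all $m$.

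\emph{Step 2: contrapositive.} Suppose $\widetilde{W}\ge0$ everywhere. By Step~1 and the LP, LC and Hankel theorems (Theorems~2--4 together with Corollary~1), all of those criteria are satisfied. So if any criterion is violated, there exists $\alpha$ with $\widetilde{W}_{\mathrm{Tr}_{-}\rho}(\alpha;\mathcal{R})<0$, and the result of Ref.~\cite{zaw2025_GME_1} then gives that $\rho$ is GME.

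\emph{Main obstacle.} The one genuinely delicate point is the hypothesis in Step~1 that $\widetilde{W}$ is a normalized real function, so that the positive-measure argument applies verbatim. Everything else is composition of earlier results. I would also note a labelling issue: the statement refers to the criteria of ``Theorems~1, 2 or 3'', but these should be read as the LP, LC and Hankel theorems, since Theorem~1 is the operator representation and not a criterion.
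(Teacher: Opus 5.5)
Your proposal is correct and follows the paper's route. The paper's proof does only two things: a violated LP, LC or Hankel criterion forces $\widetilde{W}_{\mathrm{Tr}_{-}\rho}(\alpha;\mathcal{R})<0$ somewhere, and the imported implication of Ref.~\cite{zaw2025_GME_1} then gives GME. Your checks on $\widetilde{W}$ (real, normalized, finite moments) and your relabelling of the criteria as Theorems~2--4 just make this explicit.

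One small correction to your aside on an unnormalized kernel. The LC and Hankel conditions keep their sign under $w_m\mapsto w_m/c^m$, but the LP condition of Theorem~2 does not: it becomes $c^{-n(n-2)}\bigl(w_n^{n-2}-w_{n-1}^{n-1}/c\bigr)$, and its scale-invariant form is $w_1w_n^{n-2}-w_{n-1}^{n-1}$. So in that case you must apply LP to the rescaled moments, not to the raw ones.
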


\begin{proof}
    The proof of this proposition follows the same arguments as in the proof of Proposition.~\ref{proposition: bipartite}.
\end{proof}

This result demonstrates that moment-based phase-space criteria provide a unified and experimentally viable framework for detecting multipartite entanglement. In contrast to conventional GME witnesses, which often require full tomography or access to high-order correlations, our approach relies only on low-order moments of a single effective mode. This significantly reduces the experimental overhead, particularly in platforms where phase-space measurements are naturally available.

\subsection{Illustrative examples}

We now illustrate the above results using representative families of multimode states. In particular, we consider parametrized non-Gaussian states for which the reduced Wigner function can be computed explicitly, and compare the detection thresholds obtained from moment-based criteria with those obtained from direct Wigner negativity and the PPT condition.

\textbf{\textit{Example 1: Entangled single-photon state.}}
We consider the bipartite single-photon state
\begin{equation}
\ket{\theta}
=
\cos\theta\, \ket{1,0}
+
\sin\theta\, \ket{0,1},\;\; 0\leq \theta\leq2\pi
\end{equation}
where $\ket{1,0} = a_1^\dagger \ket{0}$ and $\ket{0,1} = a_2^\dagger \ket{0}$ denote single-photon excitations in modes $1$ and $2$, respectively. We introduce the collective modes
\begin{equation}
a_\pm = \frac{a_1 \pm a_2}{\sqrt{2}},
\end{equation}
which can be inverted as
\begin{equation}
a_1 = \frac{a_+ + a_-}{\sqrt{2}}, 
\qquad
a_2 = \frac{a_+ - a_-}{\sqrt{2}}.
\end{equation}
The vacuum state is invariant under this change of modes, i.e., $a_\pm \ket{0} = 0$. We now rewrite the basis states in terms of the new modes. Using the above relations,
\begin{align*}
\ket{1,0}
&= a_1^\dagger \ket{0}\\
&= \frac{1}{\sqrt{2}}(a_+^\dagger + a_-^\dagger)\ket{0}\\
&= \frac{1}{\sqrt{2}}\big(\ket{1}_+ \ket{0}_- + \ket{0}_+ \ket{1}_-\big), \\
\ket{0,1}
&= a_2^\dagger \ket{0}\\
&= \frac{1}{\sqrt{2}}(a_+^\dagger - a_-^\dagger)\ket{0}\\
&= \frac{1}{\sqrt{2}}\big(\ket{1}_+ \ket{0}_- - \ket{0}_+ \ket{1}_-\big),
\end{align*}
where we have used the definitions 
\begin{align*}
    \ket{1}_+ = a_+^\dagger \ket{0}, \quad \ket{1}_- = a_-^\dagger \ket{0}.
\end{align*}

\noindent
Substituting these expressions into $\ket{\theta}$, we obtain
\begin{align}
\ket{\theta}
&= \frac{1}{\sqrt{2}}
\Big[
\cos\theta \big(\ket{1}_+ \ket{0}_- + \ket{0}_+ \ket{1}_-\big) \nonumber +
\sin\theta \big(\ket{1}_+ \ket{0}_- - \ket{0}_+ \ket{1}_-\big)
\Big] \nonumber \\
&= \frac{1}{\sqrt{2}}
\Big[
(\cos\theta + \sin\theta)\, \ket{1}_+ \ket{0}_- \nonumber  +
(\cos\theta - \sin\theta)\, \ket{0}_+ \ket{1}_-
\Big].
\end{align}
Defining
\begin{equation}
c_+ = \frac{\cos\theta + \sin\theta}{\sqrt{2}},
\qquad
c_- = \frac{\cos\theta - \sin\theta}{\sqrt{2}},
\end{equation}
the state can be written compactly as
\begin{equation}
\ket{\theta}
=
c_+ \ket{1}_+ \ket{0}_-
+
c_- \ket{0}_+ \ket{1}_-.
\end{equation}

\medskip

\noindent
We now compute the reduced state obtained by tracing out the $-$ mode:
\begin{equation}
\rho_+ = \tr_- \big( \ket{\theta}\bra{\theta} \big).
\end{equation}
Expanding the density matrix,
\begin{align}
\ket{\theta}\bra{\theta}
&=
|c_+|^2 \ket{1}_+\bra{1} \otimes \ket{0}_-\bra{0}
+
|c_-|^2 \ket{0}_+\bra{0} \otimes \ket{1}_-\bra{1} \nonumber
+ c_+ c_-^* \ket{1}_+\bra{0} \otimes \ket{0}_-\bra{1}
+ c_+^* c_- \ket{0}_+\bra{1} \otimes \ket{1}_-\bra{0}.
\end{align}
Taking the partial trace over the $-$ mode we get,
\begin{equation}
\rho_+
=
|c_+|^2 \ket{1}_+ \bra{1}
+
|c_-|^2 \ket{0}_+\bra{0}.
\end{equation}

\medskip

\noindent
The Wigner function of the reduced state is therefore a convex combination of the vacuum and single-photon Wigner functions,
\begin{equation}
W_{\tr_-\rho}(x_+,p_+)
=
|c_-|^2 W_0(x_+,p_+)
+
|c_+|^2 W_1(x_+,p_+),
\end{equation}
which is equivalent to the example of mixed Fock states presented in the main text, with $\lambda = |c_-|^2$. Hence $W_{\tr_-\rho}(x_+,p_+)$ is Wigner negative whenever,
\begin{align}
    |c_-|^2<0.5 \; &\Rightarrow \sin (2\theta)>0 \nonumber\\
    & \Rightarrow \theta \in \left(0,\frac{\pi}{2}\right) \cup \left(\pi,\frac{3\pi}{2}\right)
\end{align}  Hence, for a range of $\theta$, the reduced Wigner function exhibits negativity, which, by Proposition~\ref{proposition: bipartite}, certifies the presence of bipartite entanglement in the original state, and that entire range of $\theta$ can be detected by the moment based criteria (see Table.~1 of the main text).

\textbf{\textit{Example 2: Tripartite Dicke state.}}
As an illustrative example of the detection of GME using Wigner moments, we consider the tripartite two-excitation Dicke state
\begin{equation}
|D_2^3\rangle
=\frac{1}{\sqrt3}
\Bigl(
|110\rangle
+
|101\rangle
+
|011\rangle
\Bigr).
\end{equation}

Following Ref.~\cite{zaw2025_GME_2}, we construct the reduced Wigner function using the kernel specified by $\mathcal R={\ket{1}}$.
The resulting reduced Wigner function is
\begin{equation}
\widetilde W_{D_2^3}(\alpha;{\ket{1}})
=\frac{e^{-\frac32|\alpha|^2}}{32\pi}
\left(
81|\alpha|^6
-234|\alpha|^4
+
216|\alpha|^2
-16
\right).
\end{equation}

First we calculate the Hankel matrix of order $1$, whose determinant evaluates to
\begin{equation}
\det(H_1)
= 5.13\times10^{-3}>0.
\end{equation}
Therefore, the lowest-order moment condition is inconclusive for this state. Proceeding to the next level of the hierarchy, we find

\begin{equation}
\det(H_2)
=-2.23\times10^{-7}<0.
\end{equation}
Hence, the GME of \(|D_2^3\rangle\) is therefore certified.

\bibliography{wignernegativity}
\end{document}